\newcommand{\mZ}{Z}
\newcommand{\mM}{M}
\newcommand{\T}{\top}
\newcommand{\pr}{\mathrm{pr}}
\newcommand{\G}{\mathcal{G}}
\newcommand{\dcup}{\,\dot{\cup}\,}
\newcommand{\I}{\mathcal{I}}
\newcommand{\J}{\mathcal{J}}
\newcommand{\Mob}{M\"{o}bius}
\newcommand{\LML}{LML}
\newcommand{\ci}{\mbox{$\perp \! \! \! \perp$}}
\theoremstyle{plain}
\newtheorem{thm}{Theorem}[section]
\newtheorem{lem}[thm]{Lemma}
\newtheorem{cor}[thm]{Corollary}
\theoremstyle{definition}
\newtheorem{defn}{Definition}[section]
\newtheorem{exmp}{Example}[section]
\theoremstyle{remark}
\definecolor{colorS}{rgb}{0.7695312,0.0468750,0.1601562}
\begin{document}
\title{Dichotomization invariant log-mean linear parameterization for discrete graphical models of marginal independence}
\author{Alberto Roverato\\ University of Bologna\\
\texttt{alberto.roverato@unibo.it}}
\date{September 12, 2013}
\maketitle
\begin{abstract}\small
We extend the log-mean linear parameterization introduced by \citet{roverato2013log} for binary data to discrete variables with arbitrary number of levels, and show that also in this case it can be used to parameterize bi-directed graph models. Furthermore, we show that the log-mean linear parameterization allows one to simultaneously represent marginal independencies among variables and marginal independencies that only appear when certain levels are collapsed into a single one. We illustrate the application of this property by means of an example based on genetic association studies involving single-nucleotide polymorphisms. More generally, this feature provides a natural way to reduce the parameter count, while preserving the independence structure, by means of substantive constraints that give additional insight into the association structure of the variables.
\end{abstract}

\noindent\emph{Keywords}: Contingency table; Graphical Markov
model; Marginal independence; Parsimonious model; Single-nucleotide polymorphism.

\section{Introduction}\label{SEC:introduction}
Graphical models of marginal independence use a graph where every vertex is associated with a variable and missing edges encode  marginal independence relationships according to a given Markov property; see \citet{pearl1994,kauermann1996dualization,banerjee2003dualization,ric:2003}.
These models were introduced by \citet{cw:1993,cw:1996} as \emph{covariance graphs}, with dashed lines to represent edges in the graph. More recently, lines with two arrowheads are often used in place of dashed
edges and, accordingly, these models are also referred to as  \emph{bi-directed graph models}. Graphical models of marginal independence have appeared in several applied contexts as described in \citet{drton2008binary} and references therein. Their application is typically suggested when the observed variables are jointly affected by unobserved variables \citep[see, among others,][]{ric:2003,maathuis2009estimating,colombo2012learning} and, furthermore, they are a special case of \emph{acyclic directed mixed graphs} \citep{ric:2003} and can be regarded as the building blocks  of \emph{regression graph models} \citep{drton2009discrete,wermuth2012sequencies}.

The probability distribution of a set of discrete variables is characterized by the associated probability table, but defining a suitable parameterization for bi-directed graph models is not straightforward; see \citet{drton2008binary,drton2009discrete,lup-mar-ber:2009,roverato2013log}. A basic requirement for the flexible implementation of marginal constraints is that interaction terms involving a subset of variables satisfy \emph{upward compatibility}, that is they should reflect a property of the corresponding marginal distribution; see \citet{drton2008binary} for details and extensive references. Upward compatibility means invariance with respect to marginalization but, for discrete variables with arbitrary number of levels, a stronger invariance property may be required. As shown in the example below, there are situations where the research question involves different collapsed versions of a same variable. Collapsing two or more levels of a discrete variables into a single level can be regarded as as a special kind of marginalization and invariance with respect to this operation is an useful feature for a parameterization.
\begin{exmp}[\emph{Genetic association analysis}]\label{EXA:genetic001}
Genetic association studies aim at identifying genetic factors associated with a certain phenotype, such as a disease; see \citet{balding2006tutorial} for a review of statistical approaches to population association studies. Single-Nucleotide Polymorphisms (SNPs) are the most common form of variation in the human genome \citep[see][]{hirschhorn2005genome}. A SNP is a change in one nucleotide at a given genomic position. Commonly, SNPs are diallelic with two of the four bases $A$ (adenine), $C$ (cytosine), $T$ (thymine) and $G$ (guanine) occurring at the considered locus where it is possible that a \emph{wild} allele, $W$, is substituted by a \emph{mutant} allele, $M$. Hence, a SNP has three possible genotypes, $WW$, $WM$ and $MM$, and can be represented as a three-level discrete variable. The latter representation makes it possible to identify the \emph{codominant genotype effect} of the SNP on the phenotype, but the relevant phenotype may also be associated with alternative representations of the SNP. In the \emph{dominant genotype model}, heterozygote individuals are expected to have the same phenotype as $MM$ homozygote individual so that the levels $WM$ and $MM$ are collapsed into a single one to give $WM+MM$ vs. $WW$. Conversely, in the \emph{recessive genotype model} the three levels are dichotomized as $WM+WW$ vs. $MM$. In general, none of these three genetic models for a specific SNP is favored a priori, and the research question also concerns the identification of the most appropriate representation of a SNP. Clearly, there is a loss of efficiency in fitting a different statistical model for every possible genetic model, and this is especially true when more SNPs are simultaneously considered.
\end{exmp}

In this paper we extend the \emph{Log-Mean Linear (\LML)}  parameterization introduced by \citet{roverato2013log} for binary data to discrete variables with arbitrary number of levels and show that also in this case it can be used to parameterize bi-directed graph models. Furthermore, we show that the \LML\ parameterization satisfies a stronger version of upward compatibility that we call \emph{dichotomization invariance}. Every \LML\ parameter can be uniquely associated with a cell either of the cross-classified table or of a marginal table and it is invariant with respect to both marginalization and collapsing operations that does not involve such a cell. In this way, the \LML\ parameterization allows one to simultaneously represent marginal independencies among variables and marginal independencies that only appear when certain levels are collapsed into a single one. This feature is useful in several applied contexts, but it also provides a natural way to reduce the parameter count, while preserving the independence structure, by means of substantive constraints that give additional insight into the dependence structure of variables. Being able to implement additional substantive constraints so as to specify parsimonious submodels is a key issue in graphical modelling and, more generally, in multivariate analysis  \citep[see, for instance,][]{hojsgaard2008graphical} but it is especially relevant in marginal modelling because the number of parameters in a bi-directed graph model can be relatively large even for sparse graphs \citep{richardson2009factorization,eva-rich:2013}.

This paper is organized as follows. Section~\ref{SEC:background.paramet} provide a review of the theory of discrete bi-directed graph models and of the associated parameterizations as required for this paper. Section~\ref{SEC:poly.Moebius} contains the extension of the \LML\ parameterization to the general discrete case, whereas in Section~\ref{SEC:B-expanded} we introduce the binary expansion operation and state the connected set Markov property for $B$-expanded graphs. The genetic association analysis of Example~\ref{EXA:genetic001} is used throughout the paper to illustrate the main ideas whereas two applications are given in Section~\ref{SEC:applications}. Finally, Section~\ref{SEC:discussion} contains a brief discussion. Proofs are deferred to the Appendix.

\section{Parametrizations of discrete bi-directed graph models}\label{SEC:background.paramet}

\subsection{Bi-directed graph models}\label{SUB.luc.bigraphs}
Let $Y_{V}=(Y_{v})_{v\in V}$ be a  random vector with entries indexed by a finite set $V$ with $|V|=p$. In graphical models of marginal independence every variable $Y_{v}$, with $v\in V$, is associated with a vertex of a \emph{bi-directed graph} $\G=(V, E)$. The edge set $E$ is a collection of unordered, distinct, pairs of vertices and every edge $\{i, j\}\in E$ is represented as a line with two arrowheads, $i\leftrightarrow j$. A graph is \emph{complete} if every pair of vertices is joined by an edge. A subset $\emptyset\neq U\subseteq V$ induces a \emph{subgraph} $\G_{U}=(U, E_{U})$ where $E_{U}=E\cap (U\times U)$. If $\G_{U}$ is disconnected we say that $U$ is a \emph{disconnected set} of $\G$ and denote by $C_{1},\ldots, C_{r}$ its inclusion maximal connected sets that we call the \emph{connected components} of $U$; see \citet{ric:2003} for details. Recall that $U=C_{1}\dcup\cdots\dcup C_{r}$ where the symbol $\dcup$ denotes a union of disjoint sets.

A bi-directed graph model is the family of probability distributions for $Y_{V}$ that satisfy a given Markov property with respect to a bi-directed graph $\G=(V, E)$. The distribution of $Y_{V}$ is said to satisfy the \emph{pairwise Markov property} with respect to $\G$ if for every $\{i,j\}\not\in E$, with $i\neq j$, it holds that $Y_{i}$ is independent of $Y_{j}$; in symbols $Y_{j}\ci Y_{k}$. The distribution of $Y_V$ satisfies the \emph{global Markov property}, also called the \emph{connected set Markov property} by \citet{ric:2003}, if for every disconnected set $U$ of $\G$, the subvectors corresponding to its connected components $Y_{C_1},\dots,Y_{C_r}$ are mutually independent; $Y_{C_1}\ci\cdots\ci Y_{C_r}$ or, equivalently, $C_{1}\ci\cdots\ci C_{r}$. We remark that the connected set Markov property implies the pairwise
Markov property, whereas the converse is not true in general, even for strictly positive distributions.
\begin{exmp}[\emph{bi-directed 4-chain}]
In the bi-directed graph of Figure~\ref{FIG.bid.graph} the disconnected sets are $\{1,3\}$, $\{1,4\}$, $\{2,4\}$, $\{1,2,4\}$ and $\{1,3,4\}$. Under the connected set Markov property, the sets $\{1,2,4\}$ and $\{1,3,4\}$ encode the independencies $X_{\{1,2\}} \ci X_{4}$ and $X_{1} \ci X_{\{3,4\}}$, respectively, and these imply the independencies encoded by each of the remaining disconnected sets.
\end{exmp}
\begin{figure}
\begin{center}
\begin{figurepdf}
\includegraphics[scale=1]{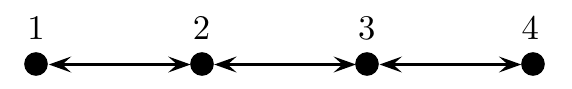}
\end{figurepdf}
\end{center}
\caption{\small Bi-directed 4-chain.}
\label{FIG.bid.graph}
\end{figure}

\subsection{The \Mob\ and \LML\ parameters for the binary case}
In order to highlight that a variable is binary we denote it by $X$ and, without loss of generality, we assume that $X\in \{0, 1\}$ so that $X_{V}=(X_{v})_{v\in V}$ is a multivariate Bernoulli random vector taking values in $\{0,1\}^{p}$. From the fact that $\{0, 1\}^{p}=\{(1_{U}, 0_{V \backslash U})\mid U\subseteq V\}$, it follows that one can write the probability distribution of $X_{V}$ as a vector $\pi=(\pi_{U})_{U \subseteq V}$ with entries $\pi_{U}=P(X_{U}=1_{U},X_{V\backslash U}=0_{V\backslash U})$.

The multivariate Bernoulli distribution belongs to the natural exponential family with mean parameter $\mu=(\mu_{U})_{U \subseteq V}$ where
\begin{eqnarray}\label{EQN:mobius.binary}
\mu_{\emptyset}=1\quad\mbox{and}\quad \mu_{U}=\pr(X_{U}=1_{U})\quad \mbox{for every } U\subseteq V\mbox{ with }U\neq\emptyset.
\end{eqnarray}
The mean parameter $\mu$ was called the \emph{\Mob\ parameter} by \citet{drton2008binary} because $\mu_U=\sum_{E \subseteq V\backslash U} \pi_{U\cup E}$ for every $U\subseteq V$ so that the inverse map $\mu\mapsto\pi$ can be computed by \emph{\Mob\ inversion} as $\pi_U=\sum_{E \subseteq V\backslash U} (-1)^{|E|} \mu_{U\cup E}$ for every $U \subseteq V$; see \citet[][Appendix~A]{lau:1996}. Let $\mZ$ and $\mM$ be two $(2^p \times 2^p)$ matrices with entries indexed by the subsets of $V \times V$ and given by $\mZ_{U,H}= 1 (U \subseteq H)$ and $\mM_{U,H}= (-1)^{|H \backslash U|} 1(U \subseteq H)$, respectively, where $1(\cdot)$ denotes the indicator function. Then, one can write the linear relationship between $\pi$ and $\mu$ in matrix form as $\mu=\mZ\pi$ and $\pi=\mM\mu$ and \Mob\ inversion follows by noticing that $\mM=\mZ^{-1}$.

\citet{roverato2013log} introduced the Log-Mean Linear (\LML) parameter $\gamma=(\gamma_{U})_{U \subseteq V}$ whose entries are computed as a log-linear expansion of the \Mob\ parameters
\begin{eqnarray}\label{EQN:map.gamma}
\gamma_U=\sum_{E \subseteq U} (-1)^{|U\backslash E|} \log \mu_E\quad\mbox{for every } U\subseteq V
\end{eqnarray}
so that, in matrix form, $\gamma=\mM^{\T}\log \mu$ where $\mM^{\T}$ denotes the transpose of $\mM$. Note that $\gamma_{\emptyset}=0$ and, furthermore, that $\pi$ can be analytically computed by applying \Mob\ inversion twice to obtain $\pi= \mM \exp \mZ^{\T} \gamma$. The \LML\ parameter can be regarded as a linearization  of the \Mob\ parameter because certain multiplicative constraints in $\mu$ correspond to \LML\ interactions equal to zero as follows.
\begin{thm}[Theorem~1 in \citet{roverato2013log}]\label{THM:gamma.eq.zero.and.fact.mu}
Let $X_{V}$ be a vector of binary variables with \Mob\ and \LML\ parameters $\mu$ and $\gamma$, respectively. Then, for a pair of disjoint, nonempty, proper subsets $A$ and $B$ of $V$, the following conditions are equivalent:
\begin{itemize}
\item[(i)] $X_{A}\ci X_{B}$;
\item[(ii)] $\mu_{A^{\prime}\cup
B^{\prime}}=\mu_{A^{\prime}}\times \mu_{B^{\prime}}$ for every
$A^{\prime}\subseteq A$ and $B^{\prime}\subseteq B$;
\item[(iii)]
$\gamma_{A^{\prime}\cup B^{\prime}}=0$ for every $A^{\prime}\subseteq
A$ and $B^{\prime}\subseteq B$ such that $A^{\prime}\neq\emptyset$
and $B^{\prime}\neq \emptyset$.
\end{itemize}
\end{thm}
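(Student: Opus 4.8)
The plan is to establish the cycle of equivalences as two separate halves: the equivalence (i)$\Leftrightarrow$(ii) is probabilistic and rests on the fact that the \Mob\ coordinates are the cross-moments of the binary vector, whereas (ii)$\Leftrightarrow$(iii) is a purely algebraic statement about the log-linear transform $\gamma=\mM^{\T}\log\mu$. Throughout I would exploit that for disjoint $A'\subseteq A$ and $B'\subseteq B$ one has $\mu_{A'\cup B'}=\mu_{A'}\mu_{B'}$ comparing the single moment $E\bigl[\prod_{v\in A'\cup B'}X_{v}\bigr]$ with the product of the two marginal moments.

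For (i)$\Leftrightarrow$(ii), the starting observation is that $\mu_{U}=\pr(X_{U}=1_{U})=E\bigl[\prod_{v\in U}X_{v}\bigr]$, so that condition (ii) is exactly the vanishing of $\mathrm{Cov}\bigl(m_{A'}(X_A),m_{B'}(X_B)\bigr)$ for all such $A',B'$, where $m_{A'}(X_A)=\prod_{v\in A'}X_{v}$ and $m_{B'}(X_B)=\prod_{w\in B'}X_{w}$. The direction (i)$\Rightarrow$(ii) is then immediate, since independence forces every cross-moment to factor. For the converse I would use that the monomials $\{m_{A'}\}_{A'\subseteq A}$, equivalently the indicators $\{1(X_{A'}=1_{A'})\}_{A'\subseteq A}$, form a linear basis of the space of functions of $X_A$ (the change of basis being precisely \Mob\ inversion), and likewise for $B$. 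Writing arbitrary $f(X_A)$ and $g(X_B)$ in these bases and invoking bilinearity of the covariance, condition (ii) propagates to $E[f(X_A)g(X_B)]=E[f(X_A)]E[g(X_B)]$ for all $f,g$; taking $f,g$ to be point indicators yields factorization of every cell of the $A\cup B$ marginal table, i.e.\ $X_{A}\ci X_{B}$.

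For (ii)$\Leftrightarrow$(iii) I would pass to $\ell_{U}=\log\mu_{U}$, noting $\ell_{\emptyset}=0$, so that $\gamma=\mM^{\T}\ell$ is the interaction transform of $\ell$ with inverse $\ell_{U}=\sum_{E\subseteq U}\gamma_{E}$, i.e.\ $\ell=\mZ^{\T}\gamma$; in these terms (ii) reads $\ell_{A'\cup B'}=\ell_{A'}+\ell_{B'}$. For (iii)$\Rightarrow$(ii) I would expand $\ell_{A'\cup B'}=\sum_{E\subseteq A'\cup B'}\gamma_{E}$ and note that, by (iii), every index $E$ meeting both $A$ and $B$ contributes zero; since $A'$ and $B'$ are disjoint the surviving indices are those lying entirely in $A'$ or entirely in $B'$, giving $\sum_{E\subseteq A'}\gamma_{E}+\sum_{E\subseteq B'}\gamma_{E}-\gamma_{\emptyset}=\ell_{A'}+\ell_{B'}$. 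For (ii)$\Rightarrow$(iii), taking a mixed index $U=A''\cup B''$ with $A'',B''$ nonempty, I would substitute $\ell_{E}=\ell_{E\cap A}+\ell_{E\cap B}$ into $\gamma_{U}=\sum_{E\subseteq U}(-1)^{|U\backslash E|}\ell_{E}$ and factor the double sum over $E\cap A\subseteq A''$ and $E\cap B\subseteq B''$; each of the two resulting pieces carries a free alternating factor $\sum_{F\subseteq B''}(-1)^{|B''\backslash F|}$ or $\sum_{F\subseteq A''}(-1)^{|A''\backslash F|}$, which vanishes because $A''$ and $B''$ are nonempty, so $\gamma_{U}=0$.

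The two alternating-binomial cancellations and the inclusion--exclusion bookkeeping are routine. The step I expect to require the most care is the converse (ii)$\Rightarrow$(i): one must justify that factorization of the cross-moments alone --- a one-sided family indexed by subsets rather than by full $\{0,1\}$-configurations --- already forces factorization at every cell of the $A\cup B$ margin. The monomial-basis argument (or, equivalently, a direct \Mob\ inversion of the factorized $\mu_{A'\cup B'}=\mu_{A'}\mu_{B'}$ carried out inside the $A\cup B$ marginal table) is what closes this gap, so I would state the basis/invertibility claim cleanly before using it.
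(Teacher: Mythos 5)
Your proposal is correct, but there is no proof in this paper to compare it against step by step: the statement is imported as Theorem~1 of \citet{roverato2013log}, with the equivalence (i)$\Leftrightarrow$(ii) attributed to Theorem~1 of \citet{drton2008binary} and point (iii) to the original \LML\ paper, and the appendix only proves the multi-level generalization (Theorem~3.1), which takes the binary case as given. What you have written is therefore a self-contained replacement for a citation, and both halves are sound. Your covariance argument for (ii)$\Rightarrow$(i) --- that the monomials $\prod_{v\in A^{\prime}}X_{v}$, $A^{\prime}\subseteq A$, span all functions of $X_{A}$, so vanishing of the cross-moment covariances propagates by bilinearity to point indicators and hence to every cell of the $A\cup B$ margin --- closes exactly the step you identify as delicate, and it amounts to a direct derivation of the special case of the Drton--Richardson result that the paper invokes. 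The algebraic half is also right: with $\ell=\log\mu$, the pair $\gamma=\mM^{\T}\ell$, $\ell=\mZ^{\T}\gamma$ is the standard \Mob{}/zeta pair; in (iii)$\Rightarrow$(ii) the mixed indices drop out and $\gamma_{\emptyset}=0$ absorbs the double-counted empty set; in (ii)$\Rightarrow$(iii) the factored double sum dies by $\sum_{F\subseteq A^{\prime\prime}}(-1)^{|A^{\prime\prime}\backslash F|}=0$ for $A^{\prime\prime}\neq\emptyset$. One point you should state explicitly: taking logarithms requires $\mu_{U}>0$ for all $U$, which is precisely the condition under which $\gamma$ is well-defined (the paper notes $\gamma_{U}$ is well-defined if and only if $\mu_{U}>0$); since the theorem presupposes that $X_{V}$ has \LML\ parameter $\gamma$, this is implicit in the hypothesis, but your (ii)$\Leftrightarrow$(iii) argument genuinely uses it, whereas (i)$\Leftrightarrow$(ii) does not.
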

The equivalence (i)$\Leftrightarrow$(ii) of Theorem~\ref{THM:gamma.eq.zero.and.fact.mu} follows immediately
from Theorem~1 of \citet{drton2008binary} where it is shown that, in the binary case, bi-directed graph models can be parameterized by imposing multiplicative constraints on the \Mob\ parameters. The latter result can be expressed in terms of the \LML\ parameters as follows.
\begin{thm}[Theorem~2 in \citet{roverato2013log}]\label{THM:gamma.eq.zero.and.disc.MP}
Let $\gamma=(\gamma_{U})_{U\subseteq V}$ be the \LML\ parameter of the binary random vector $X_{V}$ and let $\G=(V, E)$ be a bi-directed graph. The distribution of $X_{V}$ satisfies the connected set Markov property with respect to $\G$ if and only if for every set $U\subseteq V$ that is disconnected in $\G$ it holds that $\gamma_{U}=0$.
\end{thm}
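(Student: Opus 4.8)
The plan is to prove both implications by reducing to the two--set statement of Theorem~\ref{THM:gamma.eq.zero.and.fact.mu}, and the whole argument rests on one combinatorial observation. If $U$ is a disconnected set of $\G$ with connected components $C_{1},\ldots,C_{r}$ (so $r\geq 2$), then any subset $H\subseteq U$ meeting at least two of the $C_{i}$ is itself a disconnected set of $\G$: partition $H$ according to the component of $U$ containing each vertex; vertices in different parts are joined by no edge of $\G_{U}$, hence by none of $\G_{H}$, so $\G_{H}$ has at least two mutually non-adjacent parts and is disconnected. I would use this to convert the hypothesis ``$\gamma_{W}=0$ for every disconnected $W$'' into the vanishing of exactly the \LML\ interactions needed to feed Theorem~\ref{THM:gamma.eq.zero.and.fact.mu}.

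For the ``only if'' direction I would assume the connected set Markov property and fix a disconnected $U$ with components $C_{1},\ldots,C_{r}$. Grouping the last $r-1$ components, mutual independence $X_{C_{1}}\ci\cdots\ci X_{C_{r}}$ gives in particular $X_{C_{1}}\ci X_{U\setminus C_{1}}$. Since $r\geq 2$, the sets $C_{1}$ and $U\setminus C_{1}$ are disjoint, nonempty, proper subsets of $V$, so Theorem~\ref{THM:gamma.eq.zero.and.fact.mu} yields $\gamma_{A'\cup B'}=0$ for all nonempty $A'\subseteq C_{1}$ and $B'\subseteq U\setminus C_{1}$; taking $A'=C_{1}$ and $B'=U\setminus C_{1}$ gives $\gamma_{U}=0$, as required.

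For the ``if'' direction I would assume $\gamma_{W}=0$ for every disconnected $W$, fix a disconnected $U$ with components $C_{1},\ldots,C_{r}$, and recall that by the chain rule mutual independence is equivalent to the block independencies $X_{C_{k}}\ci X_{C_{1}\cup\cdots\cup C_{k-1}}$ for $k=2,\ldots,r$. Fixing $k$ and applying Theorem~\ref{THM:gamma.eq.zero.and.fact.mu} with $A=C_{k}$ and $B=C_{1}\cup\cdots\cup C_{k-1}$, I observe that for every nonempty $A'\subseteq A$ and $B'\subseteq B$ the set $A'\cup B'$ meets $C_{k}$ and at least one earlier component, hence is disconnected by the combinatorial observation, so $\gamma_{A'\cup B'}=0$ by hypothesis. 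Thus condition (iii) holds and condition (i) delivers $X_{C_{k}}\ci X_{C_{1}\cup\cdots\cup C_{k-1}}$; chaining over $k$ recovers $X_{C_{1}}\ci\cdots\ci X_{C_{r}}$, and since $U$ was arbitrary the connected set Markov property follows.

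I expect the ``if'' direction to be the main obstacle, since there one must manufacture genuine \emph{mutual} independence of $r$ components from the vanishing of individual \LML\ interactions, and the remark preceding the theorem warns that the weaker pairwise property would not suffice. The step that unlocks this is precisely that every crossing subset of a disconnected set is again disconnected, so all the interactions required to invoke the two--set Theorem~\ref{THM:gamma.eq.zero.and.fact.mu} along the chain are guaranteed to vanish; the only routine care needed is to check the properness hypotheses, which follow from $r\geq 2$, and to confirm that the chain-rule decomposition encodes full mutual rather than merely pairwise independence.
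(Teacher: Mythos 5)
Your proof is correct and follows essentially the same route the paper takes (for this cited result the paper defers to Roverato (2013), but its own proof of the generalization in Corollary~\ref{THM:gamma.eq.zero.and.disc.MP.varpi} is exactly your argument): reduce to the two-set Theorem~\ref{THM:gamma.eq.zero.and.fact.mu} by splitting off one component at a time, using the observation that any set meeting two connected components of a disconnected set is itself disconnected. The only cosmetic difference is that you chain the block independencies as $C_{k}\ci C_{1}\cup\cdots\cup C_{k-1}$ while the paper uses $C_{i}\ci C_{i+1}\cup\cdots\cup C_{r}$, which is immaterial.
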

\begin{exmp}[\emph{bi-directed 4-chain  cont.}]\label{EXA:bi-directed-4-chain-2}
The disconnected sets of the
bi-directed graph of Figure~\ref{FIG.bid.graph} are $\{1,3\}$, $\{1,4\}$, $\{2,4\}$, $\{1,2,4\}$ and $\{1,3,4\}$ so that $X_{V}$ satisfies the connected set Markov property with respect to such graph if and only if $\gamma_{\{1,3\}}=\gamma_{\{1,4\}}=\gamma_{\{2,4\}}=\gamma_{\{1,2,4\}}=\gamma_{\{1,3,4\}}=0$.
\end{exmp}

Point (iii) of Theorem~\ref{THM:gamma.eq.zero.and.fact.mu} is proved in \citet{roverato2013log} under the assumption that $\pi$ is strictly positive, and this implies that $\gamma_{U}$ is well-defined for every $U\subseteq V$. Here, it is worth remarking that, for every $U\subseteq V$, $\gamma_{U}$ in (\ref{EQN:map.gamma}) is well-defined if and only if $\mu_{U}>0$ because $\mu_{E}\geq \mu_{U}$ for every $E\subseteq U$.

\subsection{The general discrete case}
In this section, we consider the general case where the variables in $Y_{V}$ take on an arbitrary number of levels that we label as $I_{v}=\{0_{v}, 1_{v},\ldots, d_{v}\}$, for every $v\in V$. Hence, the state space of $Y_{V}$ is the product set $\I_{V}=${\Large $\times$}$_{v\in V} I_{v}$ that, with a slight abuse of terminology, we call a \emph{cross-classified table}. Accordingly, the elements $i\equiv i_{V} \in \I_V$ are called the \emph{cells} of the table. The probability distribution of $Y_{V}$ is characterized by the probability table $\varpi=(\varpi_{i})_{i\in\I_{V}}$, which we assume to be strictly positive. We remark that the symbol $\varpi$ is used to denote the probability table of an arbitrary discrete random vector whereas $\pi$ is used only in the binary case.

In the general discrete case, two different parameterizations of bi-directed graph models are available. \citet{lup-mar-ber:2009} showed that there exists a connection between bi-directed graph models and the \emph{multivariate logistic parameterization} of \citet{glo-mcc:1995}. Specifically, every multivariate logistic interaction is a log-linear parameter computed in the relevant marginal distribution and a bi-directed graph model can be specified by setting to zero all the interactions corresponding to the disconnected sets of the graph; see also \citet{rud-al:2010} and \citet{eva-rich:2013}. Parsimony can be achieved by setting further interactions to zero, but such additional constraints are typically difficult to interpret.

\citet{drton2009discrete} generalized the \Mob\ parameters to include non-binary variables and regression graph models. From the state space of $Y_{v}$, \citet{drton2009discrete} introduced the \emph{restricted state space} defined as  $J_{v}=I_{v}\backslash \{0_{v}\}=\{1_{v},\ldots, d_{v}\}$ so that the restricted state space of $Y_{V}$ is given by $\J_{V}=${\Large $\times$}$_{v\in V} J_{v}$. Hereafter, we refer to \lq\lq$0_{v}$\rq\rq{} as to the \emph{baseline} level of $Y_{v}$ and remark that the choice of the level to be set as baseline is arbitrary. For every $U\subseteq V$, with $U\neq\emptyset$, we denote by $\I_{U}$ and $\J_{U}$ the state space and the restricted  state space, respectively, of $Y_{U}$. Furthermore, for every $j\in \J_{V}$ we denote by $j_{U}$ the subset of levels of $j$ corresponding to the entries of $Y_{U}$, so that we can write $j_{U}\subseteq j$, and it holds that $\J_{U}=\{j_{U}\mid j\in\J\}$. Finally, when $U=\emptyset$ we use the convention that $j_{U}=\J_{U}=\emptyset$.

The \emph{saturated \Mob\ parameter} of $Y_{V}$ was defined by \citet{drton2009discrete} as the collection of marginal probabilities
\begin{eqnarray}\label{EQN:mobius.parameter.poly}
\pr(Y_{U}=j_{U})\quad\mbox{for every }j\in \J_{V}\mbox{ and }U\subseteq V\mbox{ with } U\neq\emptyset.
\end{eqnarray}
\citet{drton2009discrete} showed the saturated \Mob\ parameters characterize the distribution of $Y_{V}$ and that every bi-directed graph model is defined by an appropriate choice of multiplicative constraints on the saturated \Mob\ parameters. However, it is not clear how parsimony can be achieved with this parameterization. We also recall that the saturated \Mob\ parameters are closely related to the dependence ratios of \citet{Ekh-al:2000}.

\section{The \LML\ parameterization for the general discrete case}\label{SEC:poly.Moebius}
In this section we extend the \LML\ parameterization of \citet{roverato2013log} to discrete random variables with arbitrary number of levels and show that, also in this case, every bi-directed graph model can be defined  by setting certain \LML\ interactions to zero.

For every $v\in V$ and $i_{v}\in I_{v}$ we introduce the Bernoulli random variable $X_{v}^{i_{v}}$ defined as
\begin{eqnarray}\label{EQN:def.dichotomization}
X_{v}^{i_{v}}=\left\{
                  \begin{array}{ll}
                    1 & \mbox{if } Y_{v}=i_{v}  \\
                    0 & \mbox{otherwise.}
                  \end{array}
                \right.
\end{eqnarray}
In this way, every cell $i\in \I_{V}$ is associated with the random vector $X_{V}^{i}=(X_{v}^{i_{v}})_{v\in V}$ which follows a multivariate Bernoulli distribution, and we denote by $\pi^{i}=(\pi_{U}^{i})_{U\subseteq V}$ the  corresponding probability parameter. Accordingly, the mean and \LML\ parameter of $X_{V}^{i}$ can be computed as $\mu^{i}=\mZ\pi^{i}$ and $\gamma^{i}=\mM^{\T}\log \mu^{i}$, respectively.

It is straightforward to see,  from (\ref{EQN:mobius.binary}) and (\ref{EQN:def.dichotomization}), that
\begin{eqnarray}\label{EQN:mu.content}
\mu^{i}_{U}=\pr(X^{i}_{U}=1_{U})=\pr(Y_{U}=i_{U})\quad\mbox{for every }i\in \I_{V}\mbox{ and }U\subseteq V\mbox{ with } U\neq\emptyset,
\end{eqnarray}
and it follows that the \Mob\ parameter in (\ref{EQN:mobius.parameter.poly}) can be written as a collection of vectors of \Mob\ parameters for binary vectors; formally $\mu^{j}$ for $j\in \J_{V}$. In this way one can see that the collection of \LML\ parameters $\gamma^{j}$, for $j\in \J_{V}$, parameterizes the distribution of $Y_{V}$, because there exists a bijective map between $\gamma^{j}$ and $\mu^{j}$ for every $j\in \J_{V}$.
For every $i,i^{\prime}\in \I_{V}$ and $U\subseteq V$ such that $i_{U}=i^{\prime}_{U}$ it holds both that $\mu_{U}^{i}=\mu_{U}^{i^{\prime}}$ and that $\gamma_{U}^{i}=\gamma_{U}^{i^{\prime}}$ and, to remove redundancies, we write $\mu^{i_{U}}=\mu^{i}_{U}$  and $\gamma^{i_{U}}=\gamma^{i}_{U}$ so that the \Mob\ and the \LML\ parameters of $Y_{V}$ can be formally defined as
\begin{eqnarray*}
\mu=(\mu^{j_{U}})_{j\in \J_{V}, U\subseteq V}\quad\mbox{and}\quad \gamma=(\gamma^{j_{U}})_{j\in \J_{V}, U\subseteq V},
\end{eqnarray*}
respectively.

The fact that both the \Mob\ and the \LML\ parameter for the general case are defined as the collection of \Mob\ and \LML\ parameters, respectively, for the collection of  binary vectors $X^{j}_{V}$ with $j\in {\mathcal J}_{V}$ represents a key feature that confers useful properties to our approach. For instance, the generalization of relevant properties for these parameterizations follows immediately from the iterative application, for every $j\in {\mathcal J}_{V}$, of the corresponding properties of binary vectors. This is the case of the result of \citet{roverato2013log}, given in Theorem~\ref{THM:gamma.eq.zero.and.fact.mu}.
\begin{thm}\label{THM:poly.zero.gamma.and.fact.mu}
Let $\mu$ and $\gamma$ the \Mob\ and \LML\ parameters of $Y_{V}$, respectively. Then for a pair of disjoint, nonempty, proper subsets $A$ and $B$ of $V$, the following conditions are equivalent:
\begin{itemize}
\item[(i)] $Y_{A}\ci Y_{B}$;

\item[(ii)] $\mu^{j_{A^{\prime}\cup B^{\prime}}}=\mu^{j_{A^{\prime}}}\times \mu^{j_{B^{\prime}}}$  for every $j\in \J_{V}$,
$A^{\prime}\subseteq A$ and $B^{\prime}\subseteq B$;

\item[(iii)] $\gamma^{j_{A^{\prime}\cup B^{\prime}}}=0$ for every $j\in \J_{V}$, $A^{\prime}\subseteq A$ and $B^{\prime}\subseteq B$ such that $A^{\prime}\neq\emptyset$ and $B^{\prime}\neq \emptyset$.
\end{itemize}
\end{thm}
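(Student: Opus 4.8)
The plan is to reduce the general-discrete statement entirely to the binary case already established in Theorem~\ref{THM:gamma.eq.zero.and.fact.mu}, exploiting the fact that both $\mu$ and $\gamma$ for $Y_{V}$ are, by construction, assembled from the binary parameters $\mu^{j}$ and $\gamma^{j}$ indexed by $j\in\J_{V}$. The key observation is the one recorded in~(\ref{EQN:mu.content}): for each fixed $j\in\J_{V}$, the vector $X_{V}^{j}=(X_{v}^{j_{v}})_{v\in V}$ is an honest multivariate Bernoulli vector whose \Mob\ and \LML\ parameters are precisely $\mu^{j}$ and $\gamma^{j}$. So the three conditions (i)--(iii) of the present theorem should each be recast as a statement quantified over all $j\in\J_{V}$ of the corresponding binary condition for $X_{V}^{j}$, after which Theorem~\ref{THM:gamma.eq.zero.and.fact.mu} applies levelwise.

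First I would handle the equivalence (ii)$\Leftrightarrow$(iii). For a fixed $j$, the binary Theorem~\ref{THM:gamma.eq.zero.and.fact.mu} gives that $\mu^{j_{A'\cup B'}}=\mu^{j_{A'}}\times\mu^{j_{B'}}$ for all $A'\subseteq A$, $B'\subseteq B$ is equivalent to $\gamma^{j_{A'\cup B'}}=0$ for all nonempty such $A',B'$ --- but the binary equivalence is stated as $X_{A}\ci X_{B}$ iff (ii) iff (iii), so strictly I would apply the (ii)$\Leftrightarrow$(iii) part of that theorem to $X_{V}^{j}$ for each $j$ and then conjoin over $j$. Since the quantifier \lq\lq for every $j\in\J_{V}$\rq\rq{} distributes over both sides, conjoining the levelwise equivalences yields exactly (ii)$\Leftrightarrow$(iii) of the present statement. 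This step is essentially bookkeeping once the indexing convention $\gamma^{i_{U}}=\gamma^{i}_{U}$ and $\mu^{i_{U}}=\mu^{i}_{U}$ is unwound.

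The substantive step is the equivalence between the independence $Y_{A}\ci Y_{B}$ and the collection of binary factorizations (ii). Here I would argue that $Y_{A}\ci Y_{B}$ holds if and only if $X_{A}^{j}\ci X_{B}^{j}$ holds for every $j\in\J_{V}$; combining this with the binary (i)$\Leftrightarrow$(ii) applied levelwise then closes the loop. The direction $Y_{A}\ci Y_{B}\Rightarrow X_{A}^{j}\ci X_{B}^{j}$ is immediate because each $X_{v}^{j_{v}}$ is a deterministic function of $Y_{v}$, and functions of independent vectors are independent. The nontrivial direction is the converse: I would show that factorization of all the marginal probabilities $\pr(Y_{A'}=j_{A'},Y_{B'}=j_{B'})$ across all cells $j$ (which is what (ii) encodes via~(\ref{EQN:mu.content})) forces the full joint factorization $\pr(Y_{A}=i_{A},Y_{B}=i_{B})=\pr(Y_{A}=i_{A})\pr(Y_{B}=i_{B})$ for every cell $i\in\I_{V}$, including cells involving baseline levels $0_{v}$ that are not directly captured by a single $j$. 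This is where the restricted state space $\J_{V}$ versus the full $\I_{V}$ must be reconciled.

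The main obstacle, then, is precisely this reconciliation between restricted and full state spaces: condition (ii) only quantifies over $j\in\J_{V}$, so probabilities of cells containing baseline coordinates are not asserted to factor directly and must be recovered. I expect to resolve this by a \Mob-inversion / inclusion--exclusion argument, writing $\pr(Y_{A}=i_{A})$ for an arbitrary cell (baseline levels included) as a signed sum of the $\mu^{j_{A'}}$ over $A'$, and similarly for the joint probability over $A\cup B$; substituting the assumed product factorizations $\mu^{j_{A'\cup B'}}=\mu^{j_{A'}}\mu^{j_{B'}}$ and regrouping the double sum should factor the result. Concretely, since $\mu_{U}^{i}=\pr(Y_{U}=i_{U})$ is a marginal probability and the baseline-cell probabilities are exactly the $\pi$-type quantities obtained from the $\mu^{j}$ by \Mob\ inversion, the product structure on the $\mu^{j}$-level propagates to a product structure on the $\pi$-level, which is full joint independence. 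I would verify that the two nested inclusion--exclusion sums separate cleanly into an $A$-factor and a $B$-factor; granting that separation, the equivalence (i)$\Leftrightarrow$(ii) follows and the theorem is complete.
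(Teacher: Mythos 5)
Your proposal is correct and follows essentially the same route as the paper: reduce everything levelwise to the binary Theorem~\ref{THM:gamma.eq.zero.and.fact.mu}, get (ii)$\Leftrightarrow$(iii) by conjoining over $j\in\J_{V}$, and recover the factorization at cells containing baseline coordinates from the factorizations on $\J_{V}$. The only (cosmetic) difference is that the paper performs that last recovery by induction on the number of baseline coordinates, peeling off one $0_{v}$ at a time via $\pr(Y_{v}=0_{v},\cdot)=\pr(\cdot)-\sum_{i_{v}\neq 0_{v}}\pr(Y_{v}=i_{v},\cdot)$, whereas you propose the equivalent one-shot inclusion--exclusion sum; both separate into an $A$-factor and a $B$-factor for the same reason.
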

\begin{proof}
See the Appendix~\ref{APP:00B}.
\end{proof}
Theorem~\ref{THM:poly.zero.gamma.and.fact.mu} can be applied to show that bi-directed graph models can be parameterized by setting to zero the \LML\ interactions corresponding to the disconnected sets of $\G$, generalizing in this way the result of \citet{roverato2013log} given in Theorem~\ref{THM:gamma.eq.zero.and.disc.MP}.
\begin{cor}\label{THM:gamma.eq.zero.and.disc.MP.varpi}
Let $\gamma=(\gamma^{j_{U}})_{j\in \J_{V}, U\subseteq V}$ be the \LML\ parameter of $Y_{V}$ and let $\G=(V, E)$ be a bi-directed graph. The distribution of $Y_{V}$ satisfies the connected set Markov property with respect to $\G$ if and only if for every set $U\subseteq V$ that is disconnected in $\G$ it holds that $\gamma^{j_{U}}=0$ for every $j_{U}\in \J_{U}$.
\end{cor}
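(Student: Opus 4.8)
The plan is to derive the Corollary from Theorem~\ref{THM:poly.zero.gamma.and.fact.mu} by connecting the global (connected set) Markov property to a family of pairwise independence-type statements that the theorem can handle, exactly as Theorem~\ref{THM:gamma.eq.zero.and.disc.MP} is obtained from Theorem~\ref{THM:gamma.eq.zero.and.fact.mu} in the binary case. The connected set Markov property asserts, for every disconnected $U$ with connected components $C_1 \dcup \cdots \dcup C_r$, the mutual independence $Y_{C_1} \ci \cdots \ci Y_{C_r}$. The key combinatorial fact I would use is that the full collection of such mutual independencies (over all disconnected $U$) is equivalent to the collection of pairwise statements $Y_A \ci Y_B$ ranging over all disconnected sets $U = A \dcup B$ that split into exactly two connected pieces, or more carefully, over decompositions where $A$ and $B$ are unions of connected components. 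This reduction to bivariate independencies is what lets me invoke Theorem~\ref{THM:poly.zero.gamma.and.fact.mu}.

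First I would establish the forward direction. Assume the connected set Markov property holds, and let $U$ be any disconnected set with components $C_1,\dots,C_r$, $r\ge 2$. Write $A=C_1$ and $B=C_2\dcup\cdots\dcup C_r$; then $U=A\dcup B$ and, by the mutual independence of the components, $Y_A\ci Y_B$. Applying the equivalence (i)$\Leftrightarrow$(iii) of Theorem~\ref{THM:poly.zero.gamma.and.fact.mu} to this pair yields $\gamma^{j_{A'\cup B'}}=0$ for all nonempty $A'\subseteq A$, $B'\subseteq B$. Taking $A'=C_1$ and $B'=C_2\dcup\cdots\dcup C_r$ in particular gives $\gamma^{j_U}=0$ for every $j_U\in\J_U$, which is the desired conclusion for this $U$. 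Since $U$ was an arbitrary disconnected set, this establishes one implication.

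For the converse, assume $\gamma^{j_U}=0$ for every $j_U\in\J_U$ and every disconnected $U$. I would show the mutual independence of the components of any fixed disconnected $U=C_1\dcup\cdots\dcup C_r$. The natural route is induction on $r$: it suffices to show $Y_{C_1}\ci Y_{C_2\dcup\cdots\dcup C_r}$ together with the internal independencies among $C_2,\dots,C_r$, and then chain these via the general property that pairwise-splitting independencies, together with the fact that every relevant subset is itself disconnected, assemble into full mutual independence. Concretely, to obtain $Y_A\ci Y_B$ with $A$ and $B$ unions of connected components, I would verify hypothesis (iii) of Theorem~\ref{THM:poly.zero.gamma.and.fact.mu} for the pair $(A,B)$: for every nonempty $A'\subseteq A$ and nonempty $B'\subseteq B$, the set $A'\cup B'$ is disconnected in $\G$ (since $A'$ and $B'$ lie in distinct connected components of $U$, no edge joins them), hence is a disconnected set of $\G$, so by hypothesis $\gamma^{j_{A'\cup B'}}=0$. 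Theorem~\ref{THM:poly.zero.gamma.and.fact.mu} then delivers $Y_A\ci Y_B$.

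\textbf{The main obstacle} I expect is the bookkeeping in assembling pairwise/bivariate independencies into the full mutual independence $Y_{C_1}\ci\cdots\ci Y_{C_r}$, and correspondingly in the converse verifying that $A'\cup B'$ is genuinely disconnected with $A'$ and $B'$ as (unions of parts of) distinct components. The crucial point is that any nonempty $A'\subseteq A$ and nonempty $B'\subseteq B$ sit in disjoint unions of connected components of $\G_U$, so $\G_{A'\cup B'}$ has no edges between the $A'$-part and the $B'$-part and is therefore disconnected; this is what licenses applying the hypothesis to $A'\cup B'$ rather than only to $U$ itself. Once this disconnectedness claim is handled cleanly, the recombination of bivariate statements into mutual independence is a standard property of conditional/marginal independence and the argument closes, mirroring the binary proof of Theorem~\ref{THM:gamma.eq.zero.and.disc.MP}.
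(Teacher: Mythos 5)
Your proposal is correct and follows essentially the same route as the paper: both directions reduce to Theorem~\ref{THM:poly.zero.gamma.and.fact.mu} by splitting a disconnected set $U$ as $A=C_1$, $B=C_2\dcup\cdots\dcup C_r$, with the forward direction taking $A'=A$, $B'=B$ and the converse observing that any nonempty $A'\subseteq A$, $B'\subseteq B$ give a disconnected set $A'\cup B'$, then chaining the resulting bivariate independencies $Y_{C_i}\ci Y_{C_{i+1}\cup\cdots\cup C_r}$ into full mutual independence. The only cosmetic difference is that the paper frames the reverse chaining as an explicit iteration over $i=1,\ldots,r-1$ rather than an induction on $r$, and cites Theorem~4 of \citet{ric:2003} to identify the connected set Markov property with the family of mutual-independence statements.
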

\begin{proof}
See the Appendix~\ref{APP:00B}.
\end{proof}
\begin{exmp}[\emph{bi-directed 4-chain cont.}] As in Example~\ref{EXA:bi-directed-4-chain-2}, we consider the graph $\G$ in Figure~\ref{FIG.bid.graph}. Then, by Corollary~\ref{THM:gamma.eq.zero.and.disc.MP.varpi}, the distribution of $Y_{V}$ satisfies the connected set Markov property with respect to $\G$ if and only if it holds that $\gamma^{j_{U}}=0$ for every $U\in \{\{1,3\}, \{1,4\}, \{2,4\}, \{1,2,4\}, \{1,3,4\}\}$ and $j_{U}\in \J_{U}$, i.e., if and only if $\gamma^{j_{\{1,3\}}}=\gamma^{j_{\{1,4\}}}=\gamma^{j_{\{2,4\}}}=\gamma^{j_{\{1,2,4\}}}=\gamma^{j_{\{1,3,4\}}}=0$ for every $j\in \J_{V}$. Clearly, if $Y_{V}$ is binary then $|\J_{V}|=1$ and the zero \LML\ interactions are the same as in Example~\ref{EXA:bi-directed-4-chain-2}.
\end{exmp}

The \emph{discrete bi-directed graph model} for $Y_{V}$ with graph $\G=(V,E)$ is defined as the set of positive probability distributions for $Y_{V}$ that obey the connected set Markov property with respect to $\G$. \citet[][Corollary~10]{drton2009discrete} showed that discrete bi-directed graph models are curved exponential families. The simple nature of the mapping $\varpi\mapsto\gamma$ allows one to see that $\gamma$ is a smooth parameterization of the saturated model and therefore that any model defined by imposing linear constraints on the \LML\ parameter of the saturated model is a curved exponential family. The family of submodels defined by linear constraints on $\gamma$ includes discrete bi-directed graph models by Corollary~\ref{THM:gamma.eq.zero.and.disc.MP.varpi}, and in the next section we will see how additional zero constraints on the \LML\ parameters can be specified so as to obtain interpretable bi-directed graph submodels.

\section{Dichotomization invariance and \boldmath $B$-expanded graphs}\label{SEC:B-expanded}
%
The \LML\ parameterization is based on the collection of binary vectors $X_{V}^{j}$ for $j\in \J$ and it is useful to take a closer look at how these parameters are computed. For every $i\in \I_{V}$ the computation of $\mu^{i}$ is based on the probability table $\pi^{i}$ of $X_{V}^{i}$ and one should observe that $\pi^{i}$ is obtained by collapsing the levels of $Y_{V}$ in such a way that the probability in the cell $i\in \I_{V}$ is not affected by this operation; formally, $\varpi_{i}=\pi^{i}_{V}$. For this reason, we say that $\pi^{i}$ is constructed by collapsing the levels of $Y_{V}$ \lq\lq around\rq\rq{} the cell $i$, and this trivially implies that the value of $\mu^{i}$, as well as that of $\gamma^{i}$, is unaffected by collapsing operations that do not involve the level $i_{v}$ for every $v\in V$. As a consequence of this invariance property of the \LML\ parameterization, certain zero entries in the \LML\ parameter $\gamma$ of $Y_{V}$ allow one to identify marginal independencies concerning dichotomized versions of the variables, as well as to identify levels of the variables that can be collapsed without affecting the structure of the associated bi-directed graph. We formally approach this issue by introducing  the concept of binary expansion of a discrete variable, that is based on the dichotomization (\ref{EQN:def.dichotomization}).
\begin{defn}
For $v\in V$, the \emph{binary expansion of $Y_{v}$ with respect to $J_{v}$} is the $|J_{v}|$-dimensional random vector of binary variables $X_{J_{v}}=(X_{v}^{j_{v}})_{j_{v}\in J_{v}}$.
\end{defn}
The binary expansion $X_{J_{v}}$ provides an alternative representation of $Y_{v}$: every entry of $X_{J_{v}}$ corresponds to a different dichotomization of $Y_{v}$ so that, for every $j_{v}\in J_{v}$, the variable $X_{v}^{j_{v}}$ takes on the value 1 if and only if $Y_{v}=j_{v}$ and value 0 otherwise. Moreover, $X_{J_{v}}=0_{J_{v}}$ if and only if  $Y_{v}=0_{v}$. Clearly, there exist $|I_{v}|$ different binary expansions of $Y_{v}$, depending on the choice of the baseline level, and they are all equivalent, in the sense that there is a one-to-one relationship between $Y_{v}$ and any of its binary expansions. On the other hand, the specification of one binary expansion, out of the $|I_{v}|$ existing alternatives, amounts to fixing a particular perspective from which the variable structure is explored. A suitable choice of the baseline level  may correspond to a binary expansion of special interest and, ultimately, make it possible to disclose relevant additional features concerning the association structure of the variables.
\begin{exmp}[\emph{Genetic association analysis cont.}]
Let $Y_{v}$ be the variable representing a given SNP under the codominant genotype model so that $Y_{v}$ takes values in the set $\{WM, WW, MM\}$. There exist three different binary expansions of $Y_{v}$. However, if one sets the genotype $WM$ as baseline level, then every entry of the resulting binary expansion $X_{J_{v}}$ has a clear interpretation. Indeed, one of the two entries is associated with $WW$ and therefore it corresponds to the \lq\lq dominant\rq\rq{} dichotomization $WM+MM$ vs. $WW$ whereas the other entry is associated with $MM$ and corresponds to the \lq\lq recessive\rq\rq{} dichotomization $WM+WW$ vs. $MM$. In order to make our notation more intuitive, hereafter for a SNP $Y_{v}$ we label $WW$ as $D_{v}$ (for Dominant), and $MM$ as $R_{v}$ (for Recessive) so that $J_{v}=\{D_{v}, R_{v}\}$ and $X_{J_{v}}=(X^{D_{v}}_{v}, X^{R_{v}}_{v})^{\T}$.

We now turn to the variable relative to the phenotype or trait of interest. It is common for a genetic association study to be based on a case-control design where the phenotype $Y_{v}$ is a discrete variable with levels corresponding to the controls and to the different states of a given disease for the cases. Hence, by setting the controls as baseline level, every entry of the resulting binary expansion $X_{J_{v}}$ corresponds to one of the different states of the disease.
\end{exmp}

The concept of binary expansion of a variable can then be extended to that of binary expansion of a random vector $Y_{B}$, $B\subseteq V$, with respect to $J_{B}=\cup_{v\in B} J_{v}$ that is given by $X_{J_{B}}=(X_{v}^{j_{v}})_{j_{v}\in J_{v}, v\in B}$. In the rest of the paper, we assume, without loss of generality, that $J_{B}$ is fixed and shortly write that $X_{J_{B}}$ is the binary expansion of $Y_{B}$. Furthermore, if $P=V\backslash B$ then we write $Y^{B}_{V}=(Y_{P}, X_{J_{B}})$ and say that $Y^{B}_{V}$ is the \emph{$B$-expansion} of $Y_{V}$; note that $Y_{V}^{V}=X_{J_{V}}$ whereas, if $B=\emptyset$ then $J_{B}=\emptyset$ and $Y_{V}^{B}=Y_{V}$.

The main result of this section is a generalization of Theorem~\ref{THM:gamma.eq.zero.and.disc.MP.varpi} where we show that, for every $B\subseteq V$,  bi-directed graph models for $Y^{B}_{V}$ can be parameterized by setting certain entries of $\gamma$ to zero. We emphasize that here $\gamma$ is the \LML\ parameter of $Y_{V}$ and therefore the subset $B$ plays no role in its computation.

The marginal independence of $Y^{B}_{V}=(Y_{P}, X_{J_{B}})$ is encoded by a graph on $P\cup J_{B}$ vertices. By construction, for every $v\in V$, no marginal independence is present between the entries of $X_{J_{v}}$ because $X^{j_{v}}_{v}=1$ implies that $X_{J_{v}\backslash \{j_{v}\}}=0_{J_{v}\backslash \{j_{v}\}}$, for every $j_{v}\in J_{v}$. Hence, the bi-directed graph of $Y^{B}$ is a $B$-expanded graph, formally defined below.
\begin{defn}
Let $Y_{V}$ be a discrete random variable and $V=P\dcup B$ a partition of $V$. We say that $\G^{B}$ is a \emph{$B$-expanded graph for $Y_{V}$} if it is a bi-directed graph with vertex set equal to $P\cup J_{B}$, and such that the subgraph induced by $J_{v}$ is complete for every $v\in B$.
\end{defn}

\begin{exmp}[\emph{Genetic association analysis cont.}]
Consider a case-control genetic association study where $Y_{1}$ is a variable
whose baseline level corresponds to the controls and $J_{1}=\{I, II, III\}$ encodes three different stages of a disease measured on the cases. Furthermore, let $Y_{2}$ be a SNP and $J_{2}=\{D_{2}, R_{2}\}$. The graphs in Figure~\ref{FIG:bi-dir.and.exp.1} represent: (a) the complete $B$-expanded graph with $B=\emptyset$, that is, the complete bi-directed graph with vertex set $\{1, 2\}$, (b) the complete $B$-expanded graph with $B=\{2\}$ and (c) the complete $B$-expanded graph with $B=\{1, 2\}$. Note that, to improve readability, the gray color is used to represent the complete subgraphs of the expanded variables.
\begin{figure}
 \centering
\begin{figurepdf}
 \subfigure[]
   {\includegraphics[scale=1]{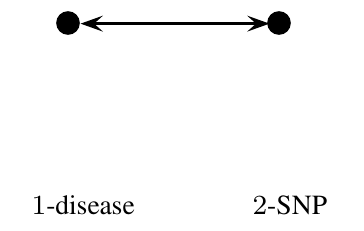}}
 \hspace{12mm}
 \subfigure[]
   {\includegraphics[scale=1]{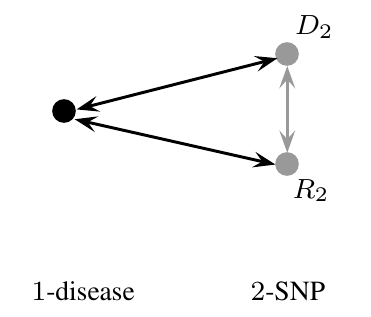}}
 \hspace{12mm}
 \subfigure[]
   {\includegraphics[scale=1]{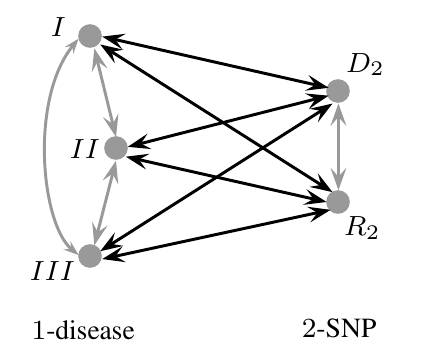}}
\end{figurepdf}
\caption{Example of complete expanded graphs for variables $Y_{1}$ and $Y_{2}$ with  $J_{1}=\{I, II, III\}$ and $J_{2}=\{D_{2}, R_{2}\}$.}
 \label{FIG:bi-dir.and.exp.1}
 \end{figure}
\end{exmp}

Before  stating the connected set Markov property on $B$-expanded graphs it is convenient to  introduce the notion of primary subsets.
\begin{defn}
Let $J_{B}=\cup_{v\in B} J_{v}$ where $B\subseteq V$ and $P=V\backslash B$. We say that $K$ is a \emph{primary} subset of $J_{B}$ if $K\subseteq  J_{B}$ and it contains at most one level for every variable in $Y_{B}$; formally, $|K\cap J_{v}|\leq 1$ for every $v\in B$. Furthermore, we say that $L$ is a primary subset of $P\cup J_{B}$ if $L\subseteq P\cup J_{B}$ and $K=L\cap J_{B}$ is a primary subset of $J_{B}$.
\end{defn}
Note that the empty set is always primary and, furthermore, if $B=\emptyset$ then $P=V$ and, in this case, every subset of $P$ is primary.
\begin{lem}\label{THM:primary.sets}
Every primary subset $L$ of $P\cup J_{B}$ can be partitioned as $L=Q\dcup K$ where $Q=P\cap L$ and $K=J_{B}\cap L$. Moreover, there exists a unique subset $D\subseteq B$ such that $K\in \J_{D}$ and, for this reason, we can write $K=j_{D}$ and $L=Q\cup j_{D}$. Conversely, for every  $Q\subseteq P$, $D\subseteq B$ and $j_{D}\in \J_{D}$ it holds that $Q\cup j_{D}$ is a primary subset of $P\cup J_{B}$.
\end{lem}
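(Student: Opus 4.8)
The plan is to unfold the definitions in three stages matching the three assertions of the lemma. First, for the canonical partition $L = Q \dcup K$, I would observe that the vertex set $P\cup J_{B}$ of the $B$-expanded graph is assembled from two kinds of objects, the original vertices collected in $P$ and the level-vertices collected in $J_{B}=\cup_{v\in B}J_{v}$. Since $V=P\dcup B$ and $J_{B}$ gathers only levels of variables indexed by $B$, these two families are disjoint, i.e.\ $P\cap J_{B}=\emptyset$. Hence any $L\subseteq P\cup J_{B}$ splits automatically as $L=(L\cap P)\dcup(L\cap J_{B})=Q\dcup K$; this first step is purely set-theoretic and requires no hypothesis on $L$.

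For the second assertion I would use the labelling convention $I_{v}=\{0_{v},1_{v},\ldots,d_{v}\}$, under which each level remembers its own variable, so that $J_{v}\cap J_{w}=\emptyset$ whenever $v\neq w$ and $J_{B}$ is in fact a disjoint union of the $J_{v}$, $v\in B$. I would set $D=\{v\in B:K\cap J_{v}\neq\emptyset\}$. Primariness of $K$ gives $|K\cap J_{v}|\leq 1$ for every $v\in B$, so $K$ meets each $J_{v}$ with $v\in D$ in exactly one level $j_{v}$ and meets every other $J_{v}$ trivially; thus $K=\{j_{v}:v\in D\}$ picks out precisely one level per variable of $D$, which is exactly the datum recorded by a tuple $j_{D}=(j_{v})_{v\in D}\in\J_{D}$. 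Identifying this tuple with the set of its entries yields $K=j_{D}\in\J_{D}$. For uniqueness of $D$, I would note that, because each level carries its variable index, the support $D$ is recoverable from $K$ as the set of variables whose level-set $K$ meets; any $D'$ with $K\in\J_{D'}$ must therefore coincide with this support. (The degenerate case $K=\emptyset$ corresponds to $D=\emptyset$ under the stated conventions.)

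For the converse I would simply check the definition directly. Given $Q\subseteq P$, $D\subseteq B$ and $j_{D}\in\J_{D}$, the associated set $K:=j_{D}=\{j_{v}:v\in D\}$ satisfies $|K\cap J_{v}|\leq 1$ for every $v\in B$ (the intersection has size $1$ for $v\in D$ and $0$ otherwise), so $K$ is a primary subset of $J_{B}$; and since $Q\subseteq P$ is disjoint from $J_{B}$, we get $(Q\cup j_{D})\cap J_{B}=K$, which is primary, whence $Q\cup j_{D}$ is by definition a primary subset of $P\cup J_{B}$.

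There is no genuine obstacle here: the statement is essentially a bookkeeping exercise. The single point I would take care to spell out is the disjointness of the level-sets across distinct variables, guaranteed by the labelling convention. This is precisely what validates the decomposition of $J_{B}$ into disjoint pieces, what lets me recover $D$ uniquely from $K$, and what underlies the tacit identification between a primary subset $K\subseteq J_{B}$ and an element $j_{D}\in\J_{D}$.
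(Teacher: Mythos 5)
Your proof is correct and follows essentially the same route as the paper, which dismisses the lemma in one line as a straightforward consequence of the defining condition $|K\cap J_{v}|\leq 1$ for every $v\in B$ --- precisely the fact you build on. Your write-up merely makes explicit the bookkeeping (disjointness of $P$ and $J_{B}$, disjointness of the $J_{v}$ across variables, recovery of $D$ as the support of $K$) that the paper leaves implicit.
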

\begin{proof}
This is a straightforward consequence of the fact that if $L$ is primary and $K=L\cap J_{B}$ then $|K\cap J_{v}|\leq 1$ for every $v\in B$.
\end{proof}

We can now state the  main result of this section.

\begin{thm}\label{THM:markov-property-for-B-expanded}
Let $\gamma=(\gamma^{j_{U}})_{j\in \J_{V}, U\subseteq V}$ be the \LML\ parameter of $Y_{V}$ and let $\G^{B}=(P\cup{}J_{B}, E^{B})$ be a $B$-expanded graph for $Y_{V}$. The distribution of $Y^{B}=(Y_{P}, X_{J_{B}})$ satisfies the connected set Markov property with respect to $\G^{B}$ if and only if for every set $L\subseteq P\cup{}J_{B}$ such that
\begin{itemize}
\item[(i)] $L$  is disconnected in $\G^{B}$,
\item[(ii)] $L$ is a primary subset of $P\cup{}J_{B}$,
\end{itemize}
it holds that $\gamma^{j_{Q\cup D}}=0$ for every $j_{Q}\in \J_{Q}$, where $L=Q\cup j_{D}$ as in Lemma~\ref{THM:primary.sets} and $j_{Q\cup D}=j_{Q}\cup j_{D}$.
\end{thm}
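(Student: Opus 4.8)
The plan is to reduce the statement to Corollary~\ref{THM:gamma.eq.zero.and.disc.MP.varpi}. The obstruction is that $Y^{B}$ is not strictly positive: by construction $X^{j_{v}}_{v}X^{j'_{v}}_{v}=0$ whenever $j_{v}\neq j'_{v}$ lie in the same $J_{v}$, so the \LML\ parameter of $Y^{B}$ is only well defined on primary sets and the Corollary cannot be invoked for $Y^{B}$ directly. Instead I would prove a chain of equivalences whose central device is to replace $Y^{B}$, one disconnected set at a time, by a strictly positive \emph{collapsed} vector to which the Corollary does apply. First I would unfold the connected set Markov property into its defining content: it holds for $Y^{B}$ with respect to $\G^{B}$ if and only if, for every disconnected $U\subseteq P\cup{}J_{B}$ with connected components $C_{1},\dots,C_{r}$, one has $Y^{B}_{C_{1}}\ci\cdots\ci Y^{B}_{C_{r}}$. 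The structural fact used throughout is that each $\G^{B}_{J_{v}}$ is complete, so $J_{v}\cap U$ lies entirely in one component; hence no component splits the indicators of a single variable, and the original variables touched by distinct components are disjoint.

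Fix such a $U$ and let $V_{U}$ be the set of original variables touched by $U$. Let $\bar{Y}^{(U)}$ be obtained from $Y_{V_{U}}$ by collapsing, for each $v\in B\cap{}V_{U}$, all levels in $I_{v}\setminus(J_{v}\cap U)$ into the baseline $0_{v}$, and leaving $Y_{v}$ unchanged for $v\in P\cap{}V_{U}$. Then $\bar{Y}^{(U)}$ is strictly positive, and because $J_{v}\cap U$ sits in a single component one has $\sigma(Y^{B}_{C_{\ell}})=\sigma(\bar{Y}^{(U)}_{\bar{C}_{\ell}})$, where $\bar{C}_{\ell}$ collects the variables owned by $C_{\ell}$. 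Thus $Y^{B}_{C_{1}}\ci\cdots\ci Y^{B}_{C_{r}}$ is literally the statement $\bar{Y}^{(U)}_{\bar{C}_{1}}\ci\cdots\ci\bar{Y}^{(U)}_{\bar{C}_{r}}$, which is equivalent to the connected set Markov property of $\bar{Y}^{(U)}$ for the bi-directed graph $\bar{\G}^{(U)}$ on $V_{U}$ whose connected components $\bar{C}_{1},\dots,\bar{C}_{r}$ each induce a complete subgraph and carry no edges between them. Since $\bar{Y}^{(U)}$ is strictly positive, Corollary~\ref{THM:gamma.eq.zero.and.disc.MP.varpi} turns this into the vanishing of the \LML\ parameters of $\bar{Y}^{(U)}$ on the disconnected sets of $\bar{\G}^{(U)}$.

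It then remains to translate these parameters back to $\gamma$. By the dichotomization invariance recorded after the definition of the binary expansion, $\mu^{i}$ and $\gamma^{i}$ are unchanged by collapsing levels other than $i_{v}$; hence for any cell whose levels avoid the collapsed ones, the relevant \LML\ parameter of $\bar{Y}^{(U)}$ equals the corresponding entry $\gamma^{j_{W}}$ of $Y_{V}$. Invoking Lemma~\ref{THM:primary.sets}, a disconnected set $W$ of $\bar{\G}^{(U)}$ together with a point of its restricted state space corresponds to a primary set $L=Q\cup{}j_{D}$ disconnected in $\G^{B}$, with $Q=W\cap{}P$ and $D=W\cap{}B$, and the range of that point is exactly $j_{Q}\in\J_{Q}$; the vanishing condition reads $\gamma^{j_{Q\cup D}}=0$ for all such $j_{Q}$. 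Collecting over all disconnected $U$, and observing that the choice $U=L$ recovers each primary disconnected $L$ on its own, the union of all these conditions coincides with the family in the statement, closing the equivalence in both directions.

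I expect the main difficulty to lie in this last, combinatorial, step: checking that as $U$ and the chosen cell vary the induced sets $L$ sweep out all primary disconnected sets of $\G^{B}$ and nothing else, and that disconnectedness of $W$ in $\bar{\G}^{(U)}$ is equivalent to disconnectedness of $L$ in $\G^{B}$---both of which again rest on the completeness of each $\G^{B}_{J_{v}}$. The accompanying positivity bookkeeping, namely that every primary marginal of $Y^{B}$ is strictly positive whereas every point of the restricted state space of a non-primary set is a structural zero, is precisely what legitimises the restriction to primary sets and explains why the Corollary cannot be applied to $Y^{B}$ as a whole.
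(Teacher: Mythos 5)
Your proposal is correct, but it follows a genuinely different route from the paper's. The paper reduces the problem \emph{globally}: its Lemma~\ref{EQN:lemma.expanded.Markov} shows that the connected set Markov property of $Y^{B}$ with respect to $\G^{B}$ is equivalent to the same property for each strictly positive collapsed vector $(Y_{P}, X_{j_{B}})$, $j_{B}\in\J_{B}$, with respect to the induced subgraph $\G^{B}_{P\cup j_{B}}$; the delicate direction of that lemma (handling non-primary disconnected sets, where several indicators of one variable appear) is proved by induction on $|K|-|D^{\ast}|$ using Lemma~\ref{THM:lemma.basic.indep}. It then applies Corollary~\ref{THM:gamma.eq.zero.and.disc.MP.varpi} to each collapsed vector and invokes Theorem~\ref{THM:cor.to.binary.exp.prop} to identify $\tilde{\gamma}^{j_{Q}}_{j_{D}}$ with $\gamma^{j_{Q\cup D}}$. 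You instead reduce \emph{locally}, one disconnected set $U$ at a time: the completeness of each $\G^{B}_{J_{v}}$ forces $J_{v}\cap U$ into a single connected component, so the indicators of one variable can be re-bundled into a single collapsed variable $\bar{Y}_{v}$, and the $\sigma$-algebra identity $\sigma(Y^{B}_{C_{\ell}})=\sigma(\bar{Y}^{(U)}_{\bar{C}_{\ell}})$ turns the defining independence statement into one about a strictly positive vector and a disjoint union of complete graphs, to which the Corollary applies. This re-bundling does exactly the work that Lemma~\ref{THM:lemma.basic.indep} and the induction in Lemma~\ref{EQN:lemma.expanded.Markov} do in the paper, so your argument dispenses with both; the price is the heavier terminal bookkeeping (the sweep over triples $(U,W,j_{W})$ and the verification that the induced sets $L$ are exactly the primary disconnected sets of $\G^{B}$), which you correctly identify as the crux and which does go through: every induced $L$ lies inside $U$, meets at least two components of $\G^{B}_{U}$, hence is disconnected and primary, and conversely the choice $U=L$, $W=V_{L}$ recovers every theorem condition. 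Two small points to tighten: your translation step $\bar{\gamma}^{j_{W}}=\gamma^{j_{W}}$ rests on the paper's informal invariance remark, so for full rigour you should record the one-line computation $\bar{\mu}^{j_{W}}_{E}=\pr(\bar{Y}_{E}=j_{E})=\pr(Y_{E}=j_{E})=\mu^{j_{E}}$ for $E\subseteq W$ (the collapsing never touches the levels of the cell $j_{W}$), from which the identity for $\gamma$ follows term by term; and you should state explicitly that, for fixed $U$, mutual independence of the full components is equivalent to the connected set Markov property for $\bar{\G}^{(U)}$ because every disconnected $W$ there has components $W\cap\bar{C}_{\ell}$, whose independence follows from that of the full components. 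What the paper's route buys is a collection of reusable intermediate statements (Lemmas~\ref{THM:lemma.basic.indep}, \ref{EQN:lemma.expanded.Markov}, \ref{THM:zero.probability} and Theorem~\ref{THM:cor.to.binary.exp.prop}, the last being of independent interest as the formal dichotomization-invariance result); what yours buys is a shorter logical path that needs only Corollary~\ref{THM:gamma.eq.zero.and.disc.MP.varpi} plus the invariance computation.
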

\begin{proof}
See the Appendix~\ref{APP:00B}.
\end{proof}
The following example clarifies the connection between \LML\ parameters and edges of
expanded graphs in the simple case where only two variables are considered. Interestingly,  when both variables are expanded, every edge of the expanded graph can be associated with exactly one \LML\ parameter.
\begin{exmp}[\emph{Genetic association analysis cont.}] Let $Y_{1}$ and $Y_{2}$ be the two variable depicted in the graphs of Figure~\ref{FIG:bi-dir.and.exp.1}. The \LML\ parameter of $Y_{\{1,2\}}$ is made up of: $\gamma^{\emptyset}=0$; the main effects of $Y_{1}$ that are
$\gamma^{\{I\}}$,
$\gamma^{\{II\}}$ and
$\gamma^{\{III\}}$; the main effects of  $Y_{2}$ given by
$\gamma^{\{D_{2}\}}$ and
$\gamma^{\{R_{2}\}}$ and, finally, the two-way interactions
$\gamma^{\{I,D_{2}\}}$,
$\gamma^{\{II,D_{2}\}}$,
$\gamma^{\{III,D_{2}\}}$,
$\gamma^{\{I,R_{2}\}}$,
$\gamma^{\{II,R_{2}\}}$ and
$\gamma^{\{III,R_{2}\}}$.
It follows from Corollary~\ref{THM:gamma.eq.zero.and.disc.MP.varpi} that the edge $1\leftrightarrow 2$ in the graph (a) of Figure~\ref{FIG:bi-dir.and.exp.1} can be removed if and only if all of the six two-way \LML\ interactions are equal to zero, so that $Y_{1}\ci Y_{2}$. Consider now the graph (b) in the same figure. Here, there are only two primary subsets involving more than one vertex, $\{1, D_{2}\}$ and $\{1, R_{2}\}$. The edge $1\leftrightarrow D_{2}$ encodes the marginal association between $Y_{1}$ and the dominant version $X^{D_{2}}_{2}$ of the SNP $Y_{2}$ and, by Theorem~\ref{THM:markov-property-for-B-expanded}, it can be removed if and only if $\gamma^{\{I,D_{2}\}}=\gamma^{\{II,D_{2}\}}=\gamma^{\{III,D_{2}\}}=0$. Similarly, the edge $1\leftrightarrow R_{2}$ can be removed if and only if $\gamma^{\{I,R_{2}\}}=\gamma^{\{II,R_{2}\}}=\gamma^{\{III,R_{2}\}}=0$. It is therefore clear that $Y_{1}$ is independent of the codominant representation  $Y_{2}$ of the SNP if and only if it is independent both of the dominant representation $X^{D_{2}}_{2}$ and the recessive representation $X^{R_{2}}_{2}$. Of main interest is the case where only one edge is missing in the graph (b) because the $B$-expanded provides additional insight into the independence structure of the two variables with respect to the traditional graph for $Y_{\{1,2\}}$. We now turn to the $B$-expanded graph (c) in Figure~\ref{FIG:bi-dir.and.exp.1}. In this case, $P=\emptyset$, $J_{B}=\{I, II, III, D_{2}, R_{2}\}$ and the primary subsets of $J_{B}$ which involve more than one vertex are $\{I,D_{2}\}$, $\{II,D_{2}\}$, $\{III,D_{2}\}$, $\{I,R_{2}\}$, $\{II,R_{2}\}$ and $\{III,R_{2}\}$. Hence, it follows from  Theorem~\ref{THM:markov-property-for-B-expanded} that every edge of the graph can be removed if and only if the corresponding two way interaction is equal to zero. For instance, the edge $II\leftrightarrow R_{2}$ can be removed if and only if $\gamma^{\{II,R_{2}\}}=0$.
\end{exmp}
We now illustrate the potentiality of $B$-expanded graphs to provide interpretable parsimonious bi-directed graph submodels.
\begin{exmp}[\emph{Genetic association analysis cont.}]\label{EXA:01}
Let $V=\{1,2,3\}$ where $Y_{1}$ and $Y_{2}$ are the two variable in the graphs in Example~\ref{EXA:01} and $Y_{3}$ is an additional SNP. In this case, apart from $\gamma^{\emptyset}=0$, the \LML\ parameter is made up of 35 entries, concretely: 7 main effects, 16 two-way interactions and 12 three-way interactions. Assume that the probability distribution of $Y_{V}$ is such that
the following 18 \LML\ parameters are equal to zero:
$\gamma^{\{j_{1},R_{2}\}}$,
$\gamma^{\{j_{1},R_{3}\}}$,
$\gamma^{\{D_{2},D_{3}\}}$,
$\gamma^{\{R_{2},D_{3}\}}$,
$\gamma^{\{D_{2},R_{3}\}}$,
$\gamma^{\{j_{1},R_{2},R_{3}\}}$,
$\gamma^{\{j_{1},R_{2},D_{3}\}}$ and
$\gamma^{\{j_{1},D_{2},R_{3}\}}$ for every $j_{1}\in J_{1}$. It is easy to see from Theorem~\ref{THM:gamma.eq.zero.and.fact.mu} that in this case there are no pairwise marginal independencies so that the distribution of $Y_{V}$ is associated with the complete graph (a) in Figure~\ref{FIG:bi-directe.and.expanded}. However, the distribution of $Y_{V}$ belongs to a parsimonious model that can be completely defined in terms of marginal independence relationships. If we set $B=\{2, 3\}$, then  the zero \LML\ parameters above are associated with the subsets of $\{1\}\cup J_{B}$ with which are both primary and disconnected in the graph (b) of Figure~\ref{FIG:bi-directe.and.expanded}. Hence, by Theorem~\ref{THM:markov-property-for-B-expanded}, the distribution of $Y^{B}_{V}$ satisfies the connected set Markov property with respect to the latter graph that implies, among others, the independence of the disease of the recessive versions of the SNPs; $Y_{1}\ci (X_{2}^{R_{2}}, X_{3}^{R_{3}})$. Clearly, it would be possible also to expand $Y_{1}$, but this would make the graph unnecessarily more complex because the zero structure of $\gamma$ does not allow us to remove any edge with an endpoint in the expansion of this variable. However, if additional interactions are equal to zero such as, for instance, $\gamma^{\{III,D_{2}\}}$
$\gamma^{\{II,D_{3}\}}$,
$\gamma^{\{III,D_{3}\}}$,
$\gamma^{\{III,D_{2},D_{3}\}}$ and
$\gamma^{\{II,D_{2},D_{3}\}}$, then it makes sense to expand also $Y_{1}$ so as to obtain the $B$-expanded graph (c) in Figure~\ref{FIG:bi-directe.and.expanded}. This graph encodes marginal relationships involving single levels of $Y_{1}$, such as $X_{1}^{III}\ci (X_{2}^{R_{2}}, X_{2}^{D_{2}}, X_{3}^{R_{3}}, X_{3}^{D_{3}})$ that is equivalent to $X_{1}^{III}\ci Y_{\{2, 3\}}$.
Note that, in the latter model, 23 out of the 35 \LML\ parameters are constrained to zero but the independence structure of $Y_{V}$ is still represented by the complete graph.
\end{exmp}
\begin{figure}
 \centering
\begin{figurepdf}
 \subfigure[]
   {\includegraphics[scale=1]{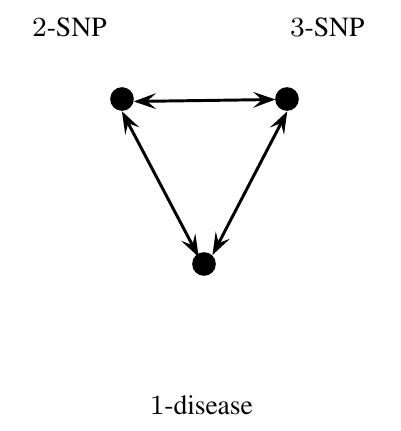}}
 \hspace{9mm}
 \subfigure[]
   {\includegraphics[scale=1]{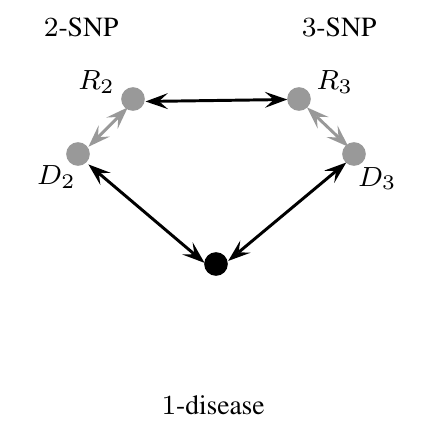}}
 \hspace{9mm}
 \subfigure[]
   {\includegraphics[scale=1]{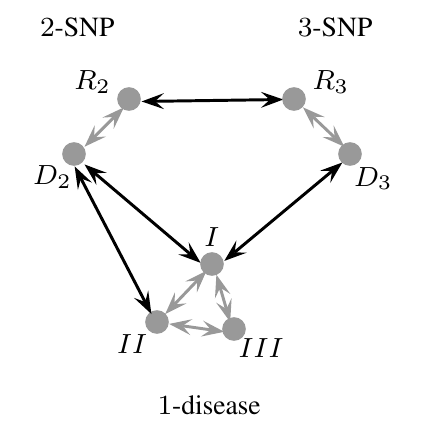}}
\end{figurepdf}
 \caption{Example of expanded graphs for $Y_{\{1,2,3\}}$ with  $J_{1}=\{I, II, III\}$ and $J_{i}=\{D_{i}, R_{i}\}$ for $i=2,3$.}\label{FIG:bi-directe.and.expanded}
 \end{figure}

The rest of this section is devoted to some basic results. These are required for the proof of Theorem~\ref{THM:markov-property-for-B-expanded} but are also of interest on their own because they provide a formal proof of the dichotomization invariance property of both the \Mob\ and the \LML\ parameterizations. It is worth remarking that caution needs to be used in the application of the existing results for marginal independence models to $Y^{B}_{V}$ because its distribution contains structural zeros. We exploit the well known fact that a random variable $Z$ is independent of $Y_{v}$ if and only if $Z$ is independent of  $X_{v}^{j_{v}}$  for every $j_{v}\in J_{v}$. This is formally stated below in a slightly different form.
\begin{lem}\label{THM:lemma.basic.indep}
Let $X_{J_{v}}=(X_{v}^{j_{v}})_{j_{v}\in J_{v}}$ the binary expansion of $Y_{v}$ and let $Z$ be a discrete random vector. For a subvector $J^{\prime}_{v}\subset J_{v}$ and an element $j_{v}\in J_{v}\backslash J^{\prime}_{v}$ it holds that $(X_{J^{\prime}_{v}}, X_{v}^{j_{v}})\ci Z$ if and only if both $X_{J^{\prime}_{v}}\ci Z$ and $X_{v}^{j_{v}}\ci Z$.
\end{lem}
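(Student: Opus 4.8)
The plan is to prove the two implications separately, observing that the forward implication is a routine consequence of marginalization while the backward implication is the one that genuinely exploits the mutually exclusive structure of the binary expansion. For the forward direction, if $(X_{J^{\prime}_{v}}, X_{v}^{j_{v}})\ci Z$, then summing the factorized joint law over the values of $X_{v}^{j_{v}}$ (respectively over those of $X_{J^{\prime}_{v}}$) immediately gives $X_{J^{\prime}_{v}}\ci Z$ and $X_{v}^{j_{v}}\ci Z$; this step uses nothing special about the variables involved.

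The core of the argument is the backward direction. First I would record the combinatorial feature of the expansion: since $j_{v}\notin J^{\prime}_{v}$, the indicators collected in $X_{J^{\prime}_{v}}$ and the indicator $X_{v}^{j_{v}}$ refer to pairwise distinct levels of $Y_{v}$, and at most one indicator of $Y_{v}$ can equal $1$. Hence the event $\{X_{v}^{j_{v}}=1\}=\{Y_{v}=j_{v}\}$ is disjoint from every event on which some entry of $X_{J^{\prime}_{v}}$ equals $1$, and the joint vector $(X_{J^{\prime}_{v}}, X_{v}^{j_{v}})$ can realize only three kinds of configurations: (a) exactly one entry of $X_{J^{\prime}_{v}}$ equals $1$ and $X_{v}^{j_{v}}=0$; (b) $X_{J^{\prime}_{v}}=0_{J^{\prime}_{v}}$ and $X_{v}^{j_{v}}=1$; and (c) $X_{J^{\prime}_{v}}=0_{J^{\prime}_{v}}$ and $X_{v}^{j_{v}}=0$. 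I would then check the required factorization $\pr(X_{J^{\prime}_{v}}=w, X_{v}^{j_{v}}=x, Z=z)=\pr(X_{J^{\prime}_{v}}=w, X_{v}^{j_{v}}=x)\,\pr(Z=z)$ separately in each configuration.

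Cases (a) and (b) are immediate: in (a) the value of $X_{J^{\prime}_{v}}$ already forces $X_{v}^{j_{v}}=0$, so the joint event coincides with the corresponding event for $X_{J^{\prime}_{v}}$ alone and the hypothesis $X_{J^{\prime}_{v}}\ci Z$ supplies the factorization; symmetrically, in (b) the joint event coincides with $\{X_{v}^{j_{v}}=1, Z=z\}$ and one uses $X_{v}^{j_{v}}\ci Z$. Case (c), the all-zero configuration, is the only nontrivial one, since neither hypothesis applies to it directly. Here I would use the disjoint decomposition $\{X_{J^{\prime}_{v}}=0_{J^{\prime}_{v}}\}=\{X_{v}^{j_{v}}=1\}\dcup\{X_{J^{\prime}_{v}}=0_{J^{\prime}_{v}}, X_{v}^{j_{v}}=0\}$, valid because $j_{v}\notin J^{\prime}_{v}$, together with additivity of probability, to write
\[
\pr(X_{J^{\prime}_{v}}=0_{J^{\prime}_{v}},\, X_{v}^{j_{v}}=0,\, Z=z)=\pr(X_{J^{\prime}_{v}}=0_{J^{\prime}_{v}},\, Z=z)-\pr(X_{v}^{j_{v}}=1,\, Z=z).
\]
Applying $X_{J^{\prime}_{v}}\ci Z$ and $X_{v}^{j_{v}}\ci Z$ to the two terms on the right, factoring out $\pr(Z=z)$, and then running the same decomposition without conditioning on $Z$ gives exactly $\pr(X_{J^{\prime}_{v}}=0_{J^{\prime}_{v}}, X_{v}^{j_{v}}=0)\,\pr(Z=z)$, which closes case (c) and the proof.

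The main obstacle is precisely case (c). For generic binary vectors the claim is false, as marginal independence of $X_{J^{\prime}_{v}}$ from $Z$ and of $X_{v}^{j_{v}}$ from $Z$ does not force joint independence; the complementation identity above is the single place where the disjointness of the underlying events is indispensable, and everything else in the argument is bookkeeping around that identity.
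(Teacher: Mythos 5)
Your proposal is correct and follows essentially the same route as the paper's own proof: a trivial forward direction by marginalization, a case analysis on the realizable configurations of $(X_{J^{\prime}_{v}}, X_{v}^{j_{v}})$, and the same complementation identity to settle the all-zero configuration, which is indeed the only nontrivial step. The sole cosmetic difference is that the paper treats configurations with more than one indicator equal to $1$ as an explicit case (both sides of the factorization vanish), whereas you dismiss them as unrealizable --- the same observation stated in a slightly less explicit form.
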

The lemma above can then be directly applied to show that the Markov property for $B$-expanded graphs is characterized by the same property for the collection of subvectors $(Y_{P}, X_{j_{B}})$, for $j_{B}\in \J_{B}$, whose probability distributions have no structural zeros.

\begin{lem}\label{EQN:lemma.expanded.Markov}
Let $\G^{B}=(P\cup J_{B}, E^{B})$ be a $B$-expanded graph for $Y_{V}$. The distribution of $Y^{B}=(Y_{P}, X_{J_{B}})$ satisfies the connected set Markov property with respect to $\G^{B}$ if and only if the distribution of $(Y_{P}, X_{j_{B}})$ satisfies the connected set Markov property with respect to $\G^{B}_{P\cup j_{B}}$ for every $j_{B}\in \J_{B}$.
\end{lem}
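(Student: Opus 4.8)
The plan is to prove the two implications separately. Throughout I would keep in mind that the connected set Markov property is a conjunction of mutual independence statements among connected components, one statement per disconnected set, and that each reduced vector $(Y_P,X_{j_B})$ has no structural zeros, so ordinary reasoning about subvectors and marginals is available for it. The ``only if'' direction is the easy one and needs only marginalization, whereas the ``if'' direction is the substantive one and is where Lemma~\ref{THM:lemma.basic.indep} does the real work.

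For the ``only if'' direction I would fix $j_B\in\J_B$ and take a set $U\subseteq P\cup{}j_B$ that is disconnected in $\G^B_{P\cup{}j_B}$, with connected components $C_1,\dots,C_r$. Since $\G^B_{P\cup{}j_B}$ is the subgraph of $\G^B$ induced by $P\cup{}j_B$, the set $U$ is disconnected in $\G^B$ with the very same components, so the connected set Markov property of $Y^B$ yields $Y^B_{C_1}\ci\cdots\ci Y^B_{C_r}$. As each $C_k\subseteq P\cup{}j_B$, every $Y^B_{C_k}$ is a subvector of $(Y_P,X_{j_B})$ and mutual independence is inherited by this marginal; this is precisely the connected set Markov property for $(Y_P,X_{j_B})$ with respect to $\G^B_{P\cup{}j_B}$. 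No structure beyond marginalization to an induced subgraph is used here.

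For the ``if'' direction I would take an arbitrary disconnected set $U\subseteq P\cup{}J_B$ of $\G^B$, with components $C_1,\dots,C_r$, and prove $Y^B_{C_1}\ci\cdots\ci Y^B_{C_r}$ by induction on the number of \emph{fat} variables, meaning those $v\in B$ with $|U\cap J_v|\ge 2$. Because each $J_v$ induces a complete subgraph, the whole block $U\cap J_v$ lies in a single component, so the base case of no fat variable is exactly a primary disconnected set: by Lemma~\ref{THM:primary.sets} it lies inside some $P\cup{}j_B$, where its components coincide with those in $\G^B$, so the assumed Markov property of $(Y_P,X_{j_B})$ gives the conclusion at once. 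For the inductive step I would pick a fat variable $v^*$ with $S=U\cap J_{v^*}\subseteq C_1$; for each level $l\in S$ the set $(U\setminus S)\cup\{l\}$ carries one fewer fat variable, its connected components refine the partition $\{(C_1\setminus S)\cup\{l\},C_2,\dots,C_r\}$, and the inductive hypothesis followed by coarsening yields $Y^B_{(C_1\setminus S)\cup\{l\}}\ci Y^B_{C_2}\ci\cdots\ci Y^B_{C_r}$ for every $l\in S$. Writing $W=Y^B_{C_1\setminus S}$ and $Z=Y^B_{C_2\cup\cdots\cup C_r}$, these statements amount to $(W,X_{v^*}^{l})\ci Z$ for all $l\in S$ together with the mutual independence of $C_2,\dots,C_r$ (which does not involve $l$); it then remains to combine them into $(W,X_S)\ci Z$, from which $Y^B_{C_1}\ci\cdots\ci Y^B_{C_r}$ follows.

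The hard part is exactly this combine step. Because $C_1$ typically contains vertices $W$ besides the block $X_S$, Lemma~\ref{THM:lemma.basic.indep} does not apply verbatim, since it pits the block alone against $Z$; what is needed is the mild extension permitting a spectator factor $W$ on the same side as the block. That extension follows by the same mutual-exclusivity computation underlying Lemma~\ref{THM:lemma.basic.indep}: from $(W,X_{v^*}^{l})\ci Z$ for all $l$ one obtains $W\ci Z$ and the factorization of every cell $\{X_S=e_l\}$, and the remaining all-zero-block cell is recovered by complementation, giving $(W,X_S)\ci Z$. A secondary point I would flag is that replacing a block by a single level can split a component; but in this direction that only refines the component partition, producing stronger independencies that are harmlessly coarsened back, so it causes no difficulty. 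With the extension in hand the induction closes and the equivalence is established.
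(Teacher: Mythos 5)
Your proof is correct, and its skeleton is the same as the paper's: the ``only if'' direction is pure marginalization to induced subgraphs, and the ``if'' direction is an induction that reduces an arbitrary disconnected set of $\G^{B}$ to primary ones, where the assumed Markov properties of the vectors $(Y_{P},X_{j_{B}})$ apply, with a mutual-exclusivity argument doing the recombination. Two differences are worth recording. First, the bookkeeping differs: the paper inducts on $|K|-|D^{\ast}|$, where $K=L\cap J_{B}$ and $D^{\ast}$ is the set of variables hit by $K$, and peels off a single level $j^{\prime}_{v}$ at a time, whereas you induct on the number of ``fat'' variables and replace the whole block $S=U\cap J_{v^{*}}$ by its individual levels; both parameters decrease correctly, so this difference is cosmetic. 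Second, and more substantively, the ``hard part'' you isolate is genuine, and the paper itself glosses over it: the paper closes its inductive step by citing Lemma~\ref{THM:lemma.basic.indep} to combine $C_{1}\backslash\{j^{\prime}_{v}\}\ci\cdots\ci C_{r}$ with $\{j^{\prime}_{v}\}\ci\cdots\ci C_{r}$, but $C_{1}\backslash\{j^{\prime}_{v}\}$ in general contains vertices outside $J_{v}$, so the lemma as stated does not cover that configuration. Nor is the bare implication true: with a spectator $W$, the statements $(W,T)\ci Z$ and $X\ci Z$ do not imply $(W,T,X)\ci Z$, even when $T$ and $X$ are mutually exclusive indicator blocks of one variable; what is needed is $(W,X)\ci Z$, which the full induction hypothesis does supply (applied to the disconnected set obtained by deleting the other levels of $v$ from $C_{1}$) but which the paper never invokes. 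Your spectator extension --- factorizing the cells of $X_{S}$ having exactly one unit entry via $(W,X_{v^{*}}^{l})\ci Z$, extracting $W\ci Z$ by marginalization, and recovering the all-zero cell by complementation --- is precisely the computation in the paper's proof of Lemma~\ref{THM:lemma.basic.indep}, now run with $W$ riding along, and it is the ingredient that makes the inductive step rigorous. In this one respect your write-up is more careful than the published proof.
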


The probability table of $Y_{B}$ is made up of $|\I_{B}|$ positive probabilities that can be written as $\pr(Y_{B}=i_{B})$ for every $i_{B}\in \I_{B}$ or, equivalently, as $\pr(Y_{D}=j_{D}, Y_{B\backslash D}=0_{B\backslash D})$ for every $j_{D}\in \J_{D}$ and $D\subseteq B$. We have named $X_{J_{B}}$ the binary expansion of $Y_{B}$ because its probability table can be hypothetically constructed by adding $2^{|J_{B}|}-|\I_{B}|$ structural-zero cells to the probability table of $Y_{B}$. More specifically, the probability distribution of $X_{J_{B}}$ can be written as $\pr(X_{K}=1_{K}, X_{J_{B}\backslash K}=0_{J_{B}\backslash K})$, for $K\subseteq J_{B}$ and, these probabilities are equal to zero if and only if $K$ is non-primary.
\begin{lem}\label{THM:zero.probability}
Let $X_{J_{B}}$ be the binary expansion of $Y_{B}$ and let $K\subseteq J_{B}$. Furthermore, let $\tilde{\mu}=(\tilde{\mu}_{K})_{K\subseteq J_{B}}$ and $\tilde{\gamma}=(\tilde{\gamma}_{K})_{K\subseteq J_{B}}$ be the \Mob\ and the \LML\ parameter of $X_{J_{B}}$, respectively, whereas $\mu$ and $\gamma$ are  the \Mob\ and the \LML\ parameter of $Y_{V}$, respectively. If $K$ is a primary subset of $J_{B}$ then we can write $K=j_{D}$ by Lemma~\ref{THM:primary.sets} and it holds both that
$\pr(X_{K}=1_{K}, X_{J_{B}\backslash K}=0_{J_{B}\backslash K})
=\pr(Y_{D}=j_{D}, Y_{B\backslash D}=0_{B\backslash D})$
and, if $j_{D}\neq\emptyset$, that $\pr(X_{K}=1_{K})=\pr(Y_{D}=j_{D})$. Furthermore,  it holds that both $\mu^{j_{D}}=\tilde{\mu}_{j_{D}}$ and $\gamma^{j_{D}}=\tilde{\gamma}_{j_{D}}$. On the other hand, if $K$ is non-primary then $\pr(X_{K}=1_{K}, X_{J_{B}\backslash K}=0_{J_{B}\backslash K})=\pr(X_{K}=1_{K})=0$, $\tilde{\mu}_{K}=0$ and, consequently, $\tilde{\gamma}_{K}$ is not well-defined.
\end{lem}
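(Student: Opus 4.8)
The plan is to exploit the defining property of the dichotomization~(\ref{EQN:def.dichotomization}): for a fixed $v\in B$ the indicators $(X_{v}^{j_{v}})_{j_{v}\in J_{v}}$ are mutually exclusive, since $X_{v}^{j_{v}}=1$ forces $Y_{v}=j_{v}$ and hence $X_{v}^{k_{v}}=0$ for every $k_{v}\neq j_{v}$; moreover all of them vanish simultaneously precisely when $Y_{v}=0_{v}$. I would first dispose of the non-primary case. If $K$ is non-primary, then $K\cap J_{v}$ contains two distinct levels $j_{v}, k_{v}$ of some $v\in B$, so the event $\{X_{K}=1_{K}\}$ requires $Y_{v}=j_{v}$ and $Y_{v}=k_{v}$ at once, which is impossible; whence $\pr(X_{K}=1_{K})=0$ and a fortiori $\pr(X_{K}=1_{K}, X_{J_{B}\backslash K}=0_{J_{B}\backslash K})=0$. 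By the binary \Mob\ formula~(\ref{EQN:mobius.binary}) applied to $X_{J_{B}}$ this gives $\tilde{\mu}_{K}=\pr(X_{K}=1_{K})=0$; since the summand $\log \tilde{\mu}_{K}$ occurs in~(\ref{EQN:map.gamma}), and $\tilde{\gamma}_{K}$ is well-defined only when $\tilde{\mu}_{K}>0$, the parameter $\tilde{\gamma}_{K}$ is not well-defined.

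Next I would treat the primary case, writing $K=j_{D}$ as in Lemma~\ref{THM:primary.sets}, so that $K$ selects exactly one level $j_{v}$ for each $v\in D$ and nothing for $v\in B\backslash D$. For the joint probability I would argue that the two events coincide. Given $K=j_{D}$, the set $J_{B}\backslash K$ consists of the levels $J_{v}\backslash\{j_{v}\}$ for $v\in D$ together with all of $J_{w}$ for $w\in B\backslash D$. Hence $\{X_{K}=1_{K}\}$ is equivalent to $\{Y_{D}=j_{D}\}$, which already forces $X_{v}^{k_{v}}=0$ for $v\in D$ and $k_{v}\neq j_{v}$; and, by the simultaneous-vanishing observation above, $\{X_{J_{w}}=0_{J_{w}}\}$ is equivalent to $\{Y_{w}=0_{w}\}$ for each $w\in B\backslash D$. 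Intersecting, one obtains $\{X_{K}=1_{K}, X_{J_{B}\backslash K}=0_{J_{B}\backslash K}\}=\{Y_{D}=j_{D}, Y_{B\backslash D}=0_{B\backslash D}\}$, which yields the first displayed equality. The same analysis with no constraint imposed on $J_{B}\backslash K$ gives $\{X_{K}=1_{K}\}=\{Y_{D}=j_{D}\}$ and hence $\pr(X_{K}=1_{K})=\pr(Y_{D}=j_{D})$ when $j_{D}\neq\emptyset$; combining this with~(\ref{EQN:mobius.binary}) and~(\ref{EQN:mu.content}) gives $\tilde{\mu}_{j_{D}}=\pr(X_{j_{D}}=1_{j_{D}})=\pr(Y_{D}=j_{D})=\mu^{j_{D}}$.

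It remains to match the \LML\ parameters, which is the step I expect to carry the real content. Both quantities are computed by the same \Mob-type inversion~(\ref{EQN:map.gamma}), but a priori over different index sets: $\tilde{\gamma}_{j_{D}}=\sum_{E\subseteq j_{D}}(-1)^{|j_{D}\backslash E|}\log\tilde{\mu}_{E}$ sums over all $E\subseteq j_{D}$, whereas expanding the definition of the \LML\ parameter of $Y_{V}$ gives $\gamma^{j_{D}}=\sum_{D^{\prime}\subseteq D}(-1)^{|D\backslash D^{\prime}|}\log\mu^{j_{D^{\prime}}}$, a sum over subsets of $D$. The key observation is that, since every subset of a primary set is primary, each $E\subseteq j_{D}$ is primary and therefore equals $j_{D^{\prime}}$ for a unique $D^{\prime}\subseteq D$, by Lemma~\ref{THM:primary.sets}; this sets up a bijection $E\leftrightarrow D^{\prime}$ between the two index sets, and it is cardinality-preserving, so that $|j_{D}\backslash E|=|D\backslash D^{\prime}|$ and the signs agree. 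Feeding in $\tilde{\mu}_{j_{D^{\prime}}}=\mu^{j_{D^{\prime}}}$ from the previous step, with $\tilde{\mu}_{\emptyset}=1=\mu^{\emptyset}$ covering $D^{\prime}=\emptyset$, makes the two sums term-by-term identical and yields $\gamma^{j_{D}}=\tilde{\gamma}_{j_{D}}$. The main point to be careful about is that every logarithm appearing is finite: this is guaranteed precisely because primality is inherited by subsets, so no $\tilde{\mu}_{E}$ vanishes in the primary case.
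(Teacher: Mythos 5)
Your proposal is correct and follows essentially the same route as the paper's proof: the same case split, the same event identities derived from the mutual exclusivity of the indicators $(X_{v}^{j_{v}})_{j_{v}\in J_{v}}$, and the same cardinality-preserving bijection $E\leftrightarrow D^{\prime}$ between $\{E\mid E\subseteq j_{D}\}$ and $\{D^{\prime}\mid D^{\prime}\subseteq D\}$ to match $\tilde{\gamma}_{j_{D}}$ with $\gamma^{j_{D}}$ term by term. Your explicit remark that primality is inherited by subsets, so that no $\tilde{\mu}_{E}$ in the sum vanishes, is a small point the paper leaves implicit but not a different argument.
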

\begin{proof}
See the Appendix~\ref{APP:00B}.
\end{proof}

We turn now to $Y^{B}_{V}=(Y_{P}, X_{J_{B}})$. The \Mob\ and the \LML\ parameters of $Y_{V}$ are
$\mu=(\mu^{j_{U}})_{j\in \J_{V}, U\subseteq V}$ and $\gamma=(\gamma^{j_{U}})_{j\in \J_{V}, U\subseteq V}$, respectively. Furthermore, we denote the \Mob\ and the \LML\ parameters of $Y^{B}_{V}$ by $\tilde{\mu}$ and $\tilde{\gamma}$, respectively, where the entries of $\tilde{\mu}$ are  $\tilde{\mu}^{j_{Q}\cup 1_{K}}$ for every $j_{Q}\in \J_{Q}$ and $K\subseteq J_{B}$, and we use the shorthand $\tilde{\mu}^{j_{Q}\cup 1_{K}}=\tilde{\mu}^{j_{Q}}_{K}$.
Similarly, we denote the entries of $\tilde{\gamma}$ by $\tilde{\gamma}_{K}^{j_{Q}}$ for  $j_{Q}\in \J_{Q}$ and  $K\subseteq J_{B}$. The following theorem states that both the \Mob\ and the \LML\ parameters are dichotomization invariant in the sense that $\mu$ and $\gamma$ are subvectors of $\tilde{\mu}$ and $\tilde{\gamma}$, respectively, whereas the remaining entries of $\tilde{\mu}$ and $\tilde{\gamma}$ are equal to zero and not well-defined, respectively.
\begin{thm}\label{THM:cor.to.binary.exp.prop}
Let $Y^{B}_{V}=(Y_{P}, X_{J_{B}})$ be the  $B$-expansion of $Y_{V}$.
Furthermore, let $\tilde{\mu}^{j_{Q}}_{K}$ and $\tilde{\gamma}^{j_{Q}}_{K}$, for $j_{Q}\in \J_{Q}$ and $K\subseteq J_{B}$, the \Mob\ and the \LML\ parameter of $Y^{B}$, respectively, whereas $\mu$ and $\gamma$ are  the \Mob\ and the \LML\ parameter of $Y_{V}$, respectively. If $L$ is a primary subset of $P\cup J_{B}$, so that $L=Q\cup j_{D}$ as in Lemma~\ref{THM:primary.sets}, then
$\mu^{j_{Q\cup D}}=\tilde{\mu}_{j_{D}}^{j_{Q}}$ and $\gamma^{j_{Q\cup D}}=\tilde{\gamma}_{j_{D}}^{j_{Q}}$, for every $j_{Q}\in \J_{Q}$.
Conversely, if $L$ is non-primary and we set $K=L\cap J_{B}$ and $Q=L\cap P$, then $\tilde{\mu}_{K}^{j_{Q}}=0$ and, consequently, $\tilde{\gamma}_{K}^{j_{Q}}$ is not well-defined, for every $j_{Q}\in \J_{Q}$.
\end{thm}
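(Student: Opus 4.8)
The plan is to unwind both parameterizations to their common definition --- dichotomize around a cell and take the binary \Mob/\LML\ parameter --- reduce the \Mob\ claim to an elementary probability identity, and then obtain the \LML\ claim for free, since $\gamma=\mM^{\T}\log\mu$ is a fixed function of the \Mob\ parameter of the underlying binary vector. The argument parallels that of Lemma~\ref{THM:zero.probability}, the only novelty being that the non-expanded block $Y_{Q}$ is carried along throughout; the whole computation splits into the primary and the non-primary case.

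First I would treat the \Mob\ parameter in the primary case. Writing $K=j_{D}$ as in Lemma~\ref{THM:primary.sets}, the \Mob\ parameter of $Y^{B}_{V}$ indexed by $W=Q\cup K$ at the restricted level $(j_{Q}, 1_{K})$ is, by (\ref{EQN:mu.content}) applied to $Y^{B}$,
\[
\tilde{\mu}^{j_{Q}}_{j_{D}}=\pr(Y_{Q}=j_{Q},\, X_{j_{D}}=1_{j_{D}}).
\]
Since $X^{j_{v}}_{v}=1$ if and only if $Y_{v}=j_{v}$ for each $v$, the event $\{X_{j_{D}}=1_{j_{D}}\}$ coincides with $\{Y_{D}=j_{D}\}$, so the right-hand side equals $\pr(Y_{Q\cup D}=j_{Q\cup D})=\mu^{j_{Q\cup D}}$, again by (\ref{EQN:mu.content}). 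This proves $\mu^{j_{Q\cup D}}=\tilde{\mu}^{j_{Q}}_{j_{D}}$ and, applied to every sub-configuration, is the key input to the \LML\ step.

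For the \LML\ parameter in the primary case, both $\gamma^{j_{Q\cup D}}$ and $\tilde{\gamma}^{j_{Q}}_{j_{D}}$ are components of $\mM^{\T}\log(\cdot)$ in (\ref{EQN:map.gamma}) evaluated on binary vectors obtained by dichotomizing around a cell. For $v\in Q$ both vectors carry the coordinate $X^{j_{v}}_{v}$, whereas for $v\in D$ the expanded coordinate $X^{j_{v}}_{v}$, dichotomized around its own level $1$, is again $X^{j_{v}}_{v}$. Under the coordinate bijection $D\ni v\leftrightarrow j_{v}\in K$ the two binary vectors therefore coincide, so the map $E\subseteq W \leftrightarrow F\subseteq Q\cup D$ given by $F=(E\cap Q)\cup\{v: j_{v}\in E\cap K\}$ preserves cardinalities, hence the signs $(-1)^{|W\backslash E|}=(-1)^{|(Q\cup D)\backslash F|}$, and matches the logarithms $\log\tilde{\mu}^{j_{E\cap Q}}_{E\cap K}=\log\mu^{j_{F}}$ by the \Mob\ identity just established. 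Summing (\ref{EQN:map.gamma}) term by term then gives $\tilde{\gamma}^{j_{Q}}_{j_{D}}=\gamma^{j_{Q\cup D}}$. I expect this bookkeeping --- verifying the bijection, the sign agreement, and the matching of log-terms across the two subset lattices --- to be the main obstacle, although it is routine once the coordinate identification is in place.

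Finally, for the non-primary case I would argue as in the last part of Lemma~\ref{THM:zero.probability}. If $K=L\cap J_{B}$ is non-primary then $K$ contains two distinct levels $j_{v}, j'_{v}\in J_{v}$ of the same variable $Y_{v}$; since $Y_{v}$ takes a single value, $X^{j_{v}}_{v}=X^{j'_{v}}_{v}=1$ is impossible, whence $\pr(X_{K}=1_{K})=0$ and thus $\tilde{\mu}^{j_{Q}}_{K}=\pr(Y_{Q}=j_{Q},\, X_{K}=1_{K})=0$ for every $j_{Q}\in\J_{Q}$. Because $\tilde{\gamma}^{j_{Q}}_{K}$ in (\ref{EQN:map.gamma}) involves $\log\tilde{\mu}^{j_{Q}}_{K}$, it is not well-defined, as claimed.
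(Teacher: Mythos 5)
Your proof is correct, but it follows a more direct route than the paper's. The paper establishes both identities by a modular two-step reduction: it first applies Lemma~\ref{THM:zero.probability} to the \emph{full} binary expansion $X_{J_{V}}=(X_{J_{P}},X_{J_{B}})$ of $Y_{V}$ to obtain $\mu^{j_{Q\cup D}}=\tilde{\mu}_{j_{Q\cup D}}$ and $\gamma^{j_{Q\cup D}}=\tilde{\gamma}_{j_{Q\cup D}}$ (parameters of $X_{J_{V}}$); it then invokes upward compatibility to argue that $\tilde{\mu}^{j_{Q}}_{j_{D}}$ and $\tilde{\gamma}^{j_{Q}}_{j_{D}}$ can be computed in the margin $(Y_{Q},X_{j_{D}})$, checks that this margin has a strictly positive table, and applies Lemma~\ref{THM:zero.probability} a second time with $(Y_{Q},X_{j_{D}})$ playing the role of $Y_{V}$ and $(X_{J_{Q}},X_{j_{D}})$ as its expansion, finally chaining the equalities. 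You instead inline the whole computation: the event identification $\{X_{j_{D}}=1_{j_{D}}\}=\{Y_{D}=j_{D}\}$ settles the \Mob\ claim outright, and the \LML\ claim follows by matching the two defining sums in (\ref{EQN:map.gamma}) term by term under the cardinality-preserving lattice bijection $E\mapsto F=(E\cap Q)\cup\{v:j_{v}\in E\cap K\}$; this is essentially the computation inside the proof of Lemma~\ref{THM:zero.probability}, generalized to carry the non-expanded block $Y_{Q}$ along, exactly as you say. Both arguments ultimately rest on the same two facts, namely that dichotomizing the binary indicator $X^{j_{v}}_{v}$ around its level $1$ returns $X^{j_{v}}_{v}$ itself, and that subsets of the primary set $Q\cup j_{D}$ biject with subsets of $Q\cup D$. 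What your version buys is self-containedness and transparency: it avoids the double application of Lemma~\ref{THM:zero.probability}, does not need the positivity check on the intermediate table, and makes the well-definedness of $\tilde{\gamma}^{j_{Q}}_{j_{D}}$ visible, since every log-term in the sum is some $\log\mu^{j_{F}}$ with $\mu^{j_{F}}>0$. What the paper's version buys is brevity and reuse of existing machinery, making explicit that the theorem is really Lemma~\ref{THM:zero.probability} combined with upward compatibility. Your treatment of the non-primary case coincides with the paper's.
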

\begin{proof}
See the Appendix~\ref{APP:00B}.
\end{proof}
\section{Applications}\label{SEC:applications}
Learning the structure of a bi-directed graph model from data requires the specification of a strategy for the exploration of the set of candidate structures. Typically, exhaustive search is unfeasible because the cardinality of the space of all possible structures grows exponentially with the number of vertices of the graph. More specifically, there are  $2^{p \choose 2}$ bi-directed graphs on $|V|=p$ vertices, but the dimension of the search space further increases when expanded graphs are considered. For instance, if every variable has exactly 3 levels, then the complete $V$-expanded graph has 4 edges joining every pair of variables and the number of candidate structures grows to $\left\{2^{p \choose 2}\right\}^{4}$.
For the analyses in this section, we applied a search strategy designed to explore the relationship between expanded versions of the variables. The procedure starts from an arbitrary ordering of the variables, $Y_{(1)},\ldots, Y_{(p)}$. Then, for the pair $(Y_{(1)}, Y_{(2)})$ the graph  $G^{V}$ is set equal to the complete bi-directed graph and the family of graphs given by $G^{V}$ and all its subgraphs obtained by removing edges between the expanded versions of $Y_{(1)}$ and $Y_{(2)}$ is considered. An exhaustive search is performed within this set of candidate structures and a model is selected on the basis of a predefined criterion. A similar exhaustive search is then performed for every pair of variables, from $(Y_{(1)}, Y_{(3)})$ to $(Y_{(p-1)}, Y_{(p)})$, with the only modification that, at every step, $G^{V}$ is set equal to the graph of the model selected in the previous step. Clearly, the model resulting from the application of this procedure may depend on the initial ordering of the variable and, to avoid this arbitrariness, the search procedure above is run $p!$ times, once for every permutation of the variables. Hence, one of the resulting $p!$, not necessarily distinct, models is selected by further application of the predefined criterion.

The criterion we use here to select a model within a set of candidate models is to choose the model with the optimal value of the Bayesian information criterion (BIC) among those whose $p$-value, computed on the basis of the asymptotic chi-squared distribution of the deviance, is no smaller than $0.05$.

Maximum likelihood estimation for \LML\ models under a multinomial or Poisson sampling scheme is a constrained optimization problem, that can be carried out by using gradient-based ascent methods; we refer to \citet{lang:1996} and \citet{ber-al:2009} for details. We remark that the likelihood function can be expressed in terms both of \Mob\ parameters and of \LML\ parameters but not analytically in terms of multivariate logistic parameters because an analytic form of the inverse map to compute $\varpi$ from the marginal logistic parameters is not available; see also \citet{roverato2013log}.

\subsection{Satisfaction with housing data}\label{SEC:app.satisfaction}
\citet[][exercise 8.28]{agresti2013categorical} gives a $4\times 3\times 3\times 2$ table originally described by \citet{madsen1976statistical} that refers to a sample of 1681 residents of Copenhagen. The variables are $Y_{1}=\;$\emph{type of housing} (levels: $Ap=\;$\emph{apartments}, $At=\;$\emph{atrium houses}, $Te=\;$\emph{terraced houses}, $To=\;$\emph{tower blocks}), $Y_{2}=\;$\emph{feeling of influence on apartment management} (levels: $L_{2}=\;$\emph{low}, $M_{2}=\;$\emph{medium}, $H_{2}=\;$\emph{high}), $Y_{3}=\;$\emph{satisfaction with housing conditions} (levels:  $L_{3}=\;$\emph{low}, $M_{3}=\;$\emph{medium}, $H_{3}=\;$\emph{high}) and  $Y_{4}=\;$\emph{degree of contact with other residents} (levels: $L_{4}=\;$\emph{low}, $H_{4}=\;$\emph{high}).
An analysis of these data within the family of undirected graphical models did not highlight any structural relationship between variables because no pairwise conditional independence could be identified on the basis of the asymptotic chi-squared distribution of the deviance. However, an exhaustive search within the family of bi-directed graph models resulted in the graph in Figure~\ref{FIG:housing-data-A} which encodes the structural relationship  $Y_{3}\ci Y_{4}$, that is the marginal independence of the satisfaction with housing condition from the degree of contact with other residents. This model has deviance $5.13$ on 2 degrees of freedom ($p=0.08$, BIC=$-9.7$).
\begin{figure}
 \centering
\begin{figurepdf}
 \subfigure[][\label{FIG:housing-data-A}]
   {\includegraphics[scale=1]{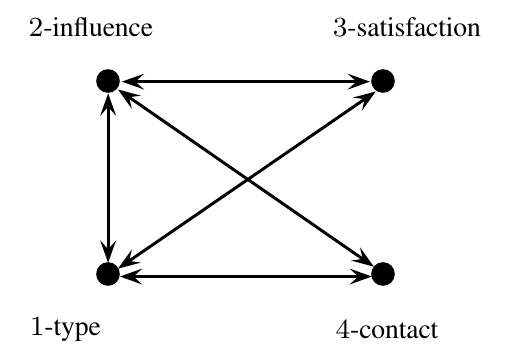}}
 \hfill
 \subfigure[][\label{FIG:housing-data-B}]
   {\includegraphics[scale=1]{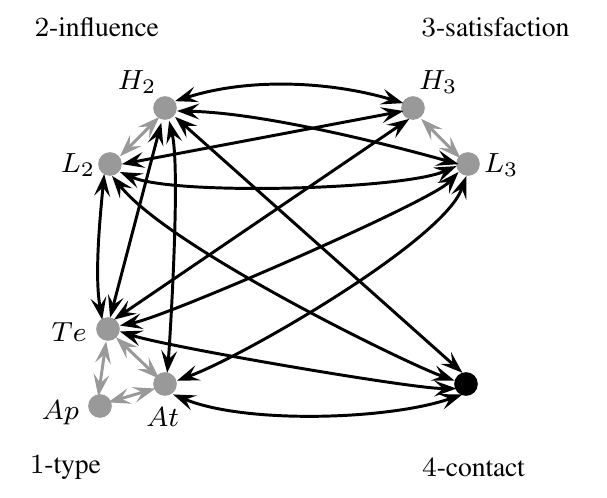}}
 \subfigure[][\label{FIG:housing-data-C}]
   {\includegraphics[scale=1]{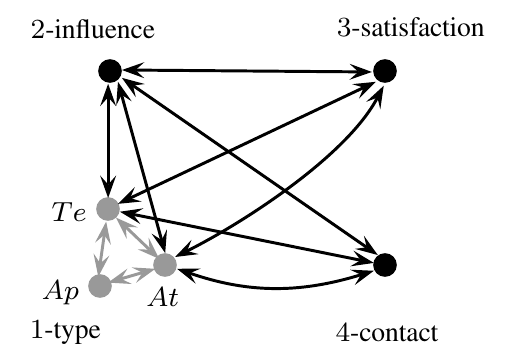}}
\end{figurepdf}
 \caption{Housing data: three alternative graphical representations of the selected model.}\label{FIG:housing-data}
 \end{figure}
We then considered the expanded version of the variables. Variable $Y_{4}$ is binary and, therefore, it can not be expanded. Variables $Y_{2}$ and $Y_{3}$ have 3 ordinal levels and the central level \emph{medium} is a natural baseline because it makes sense to collapse it with either the level \emph{low} or the level \emph{high} leading to the binary expansions $(X_{v}^{L_{v}}, X_{v}^{H_{v}})^{\T}$ for $v=2,3$. The variable $Y_{1}$ has no natural baseline and we arbitrarily set the level \emph{tower blocks} as baseline. We then applied the structural learning procedure described above thereby identifying the model represented in Figure~\ref{FIG:housing-data-B} with  deviance $34.34$ on 23 degrees of freedom ($p=0.06$, BIC=$-136.5$). This is a parsimonious submodel of the model in  Figure~\ref{FIG:housing-data-A} with 21 additional constraints that can be interpreted as (i) $X_{1}^{Ap}\ci (Y_{2}, Y_{3},Y_{4})$ and (ii) $X_{1}^{At}\ci (X_{2}^{L_{2}}, X_{3}^{H_{3}})$. The graph in Figure~\ref{FIG:housing-data-B} fully identifies the selected model, but has the disadvantage that it is dense and, to some extent, difficult to read. For this reason, we believe that the graph in Figure~\ref{FIG:housing-data-C} where only variable $Y_{1}$ is expanded, although failing to represent the independence $X_{1}^{At}\ci (X_{2}^{L_{2}}, X_{3}^{H_{3}})$, may provide a good compromise between readability  of the graphical representation and completeness of the information provided by the graph.

\subsection{SNPs data}
The package \texttt{SNPassoc} \citep{Gonzalez2012SNPassoc} for the statistical sofware R \citep{R2013} supplies data from the HapMap project (\url{http://www.hapmap.org}) with 120 observations on the 9808 SNPs and a phenotype binary variable, named \emph{group} indicating the population (European population and Yoruba). The genomic information concerning names of SNPs, chromosomes and genetic position is also available. We applied the function \texttt{getSignificantSNPs} provided in the package, with the default parameter values, to extract the 21 SNPs in chromosome number one that are significantly associated with the phenotype under the codominant genotype model. This first step of the analysis may be followed by more specific analysis aimed to understand the association structure between significant SNPs. However, carrying out a fully multivariate analysis represents a challenging task because table of counts are highly sparse and, in fact, since low  frequencies are naturally associated with the mutant alleles, even three way marginal tables have typically a large proportion of zero entries. For this reason, models of marginal independence may prove useful because the association structures learnt by considering subsets of variables may be used to hypothesize the joint behaviour of a larger sets of variables. As an example, we denoted the 21 significant SNPs by SNP-$i$ for $i=1,\ldots, 21$, according to their position on the chromosome, and investigated the association structure between neighbouring SNPs. More specifically, for every $i=1,\ldots, 20$  we applied an exhaustive search to learn an expanded bi-directed graph between SNP-$i$ and its neighbour SNP-$(i+1)$. In most cases the saturated model was selected, but some interesting feature emerged. For instance, according to the selected models, SNP-9 (named rs11165510) is associated with its neighbours SNP-8 (named rs10782748) and SNP-10 (named rs1413241) only through its dominant version as represented in Figure~\ref{FIG:SNP-data-A}, with deviances $4.23$ and $1.60$, respectively, on 2 degrees of freedom ($p=0.12$, BIC=$-5.34$  and $p=0.44$, BIC=$-7.91$, respectively). An exhaustive search for the association between SNP-8 and SNP-10 produced the saturated model and the union of the three selected models gives the model represented by the graph in Figure~\ref{FIG:SNP-data-A} that provides  an adequate fit of the data with  deviance $8.85$ on 8 degrees of freedom ($p=0.35$, BIC=$-29.24$).
\begin{figure}
 \centering
\begin{figurepdf}
 \subfigure[][\label{FIG:SNP-data-A}]
   {\includegraphics[scale=1]{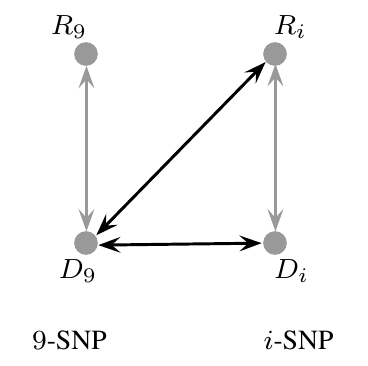}}
 \hspace{3cm}
 \subfigure[][\label{FIG:SNP-data-B}]
   {\includegraphics[scale=1]{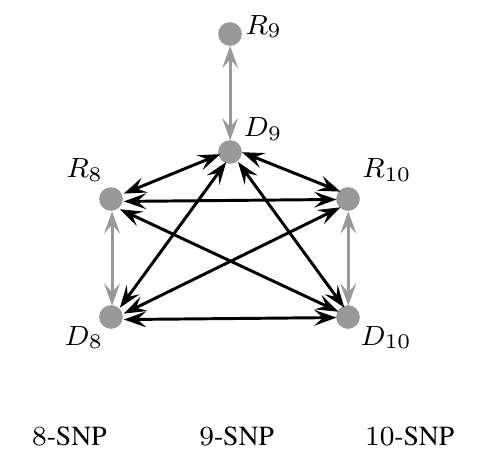}}
\end{figurepdf}
\caption{SNPs data: expanded bi-directed graphs for the selected models of (a) SNP-9 and SNP-$i$ for $i=8,10$ and (b)  SNP-8, SNP-9 and SNP-10.}
 \label{FIG:SNP-data}
\end{figure}

\section{Discussion}\label{SEC:discussion}

\citet{roverato2013log} introduced the \LML\ parameterization for binary data and explained its advantages with respect to both the multivariate logistic and the \Mob\ parameterization. The extension of this approach to variables with an arbitrary number of levels, has disclosed the additional invariance property that makes it possible to use the \LML\ parameter of $Y_{V}$ to characterize the connected set Markov property for any $B$-expansion $Y^{B}_{V}$ of $Y_{V}$. The idea of using
the indicator variables (\ref{EQN:def.dichotomization}) in the treatment of categorical variables with more than two levels is not new; see among others \citet[][Section~7]{ravikumar2010high} and \citet[][Section~13.3.2]{tutz2011regression}. However, this approach seems especially powerful in the treatment of marginal models and, furthermore, the \LML\ parameterization allows one to fully exploit its advantages. Indeed, using the \LML\ parameterization amounts to implicitly working with the binary expansions of variables or, from a different perspective, the \LML\ parameterization allows one to deal with expanded variables implicitly, so as to avoid all the difficulties associated with the expansion operation. For instance, one has not to worry either for the presence of structural zeros or for the inefficiencies deriving from the artificial increase in dimensionality. In structural learning, the set $B$ does not need to be defined a priori, but it can be specified after a \LML\ model has been selected from data. For instance, $B$ can be chosen so as to optimize the trade-off between readability of the graph and the need to explicit the learnt independencies involving every single expanded variable, as shown in the application of Section~\ref{SEC:app.satisfaction}.

Open questions include the specification of the baseline level in situations where there is no \lq{}natural\rq{} baseline level, as well as the specification of the baseline level for ordinal variables. In Section~\ref{SEC:applications} we implemented a naive structural learning procedure, but the general issue of developing model search strategies for the identification of a parsimonious bi-directed graph submodel is still open. We note that it may be appealing to restrict the search space to the models characterized by the family of $V$-expanded graphs, because it is made up of interpretable models, and it is smaller than the the family of all the models obtained by constraining a subset of the \LML\ parameters to zero.

We close this discussion by remarking that, by Theorem~\ref{THM:cor.to.binary.exp.prop}, also the \Mob\ parameterization satisfies the same dichotomization invariance property as the \LML\ parameterization. However, the \LML\ parameterization has the advantage that marginal independence relationships correspond to, linear, zero constraints in the space of the parameters.

\section*{Acknowledgments}
We gratefully acknowledge useful discussions with Robert Castelo, Monia Lupparelli and Luca la Rocca.

\appendix

\section{Proofs}\label{APP:00B}
\subsection*{Proof of Theorem~\ref{THM:poly.zero.gamma.and.fact.mu}}
\noindent{}(i)$\,\Rightarrow\,$(ii). Firstly, we note that the factorizations in (ii) are trivially true when at least one between $A^{\prime}$ and $B^{\prime}$ is equal to the empty set because $\mu^{j_{\emptyset}}=1$. The independence $Y_{A}\ci Y_{B}$ implies that $Y_{A^{\prime}}\ci Y_{B^{\prime}}$ for every $A^{\prime}\subseteq A$ and $B^{\prime}\subseteq B$ with $A^{\prime},B^{\prime}\neq \emptyset$. In turn, $Y_{A^{\prime}}\ci Y_{B^{\prime}}$ implies that, for every $j\in \J_{V}$,  $\pr(Y_{A^{\prime}}=j_{A^{\prime}}, Y_{B^{\prime}}=j_{B^{\prime}})=\pr(Y_{A^{\prime}}=j_{A^{\prime}})\times \pr(Y_{B^{\prime}}=j_{B^{\prime}})$ and since $Y_{A^{\prime}}=j_{A^{\prime}}$ if and only if $X^{j}_{A^{\prime}}=1_{A^{\prime}}$, and similarly for $Y_{B^{\prime}}$, this implies that $\pr(X^{j}_{A^{\prime}}=1_{A^{\prime}}, X^{j}_{B^{\prime}}=1_{B^{\prime}})=\pr(X^{j}_{A^{\prime}}=1_{A^{\prime}})\times \pr(X^{j}_{B^{\prime}}=1_{B^{\prime}})$ and the result follows by (\ref{EQN:mu.content}) because the latter factorization can be written as  $\mu^{j_{A^{\prime}\cup B^{\prime}}}=\mu^{j_{A^{\prime}}}\times \mu^{j_{B^{\prime}}}$.

\noindent{}(i)$\Leftarrow$(ii). By Theorem~\ref{THM:gamma.eq.zero.and.fact.mu}, (ii) implies that $X_{A}^{j}\ci X_{B}^{j}$ for every $j\in \J_{V}$, and therefore that
$\pr(X^{j}_{A}=1_{A}, X^{j}_{B}=1_{B})=\pr(X^{j}_{A}=1_{A})\times \pr(X^{j}_{B}=1_{B})$, for every $j\in {\mathcal J}_{V}$. The latter factorization is equivalent to
\begin{eqnarray}\label{EQN:001}
\pr(Y_{A}=j_{A}, Y_{B}=j_{B})=\pr(Y_{A}=j_{A})\times \pr(Y_{B}=j_{B})
\quad\mbox{for every }j\in {\mathcal J}_{V}
\end{eqnarray}
and we have to show that the factorization in (\ref{EQN:001}) holds for every $i\in \I_{V}$. This follows immediately form Theorem~8 of \citet{drton2009discrete} but we formally prove it for completeness. Every $i\in \I_{V}$ with $i\not\in \J_{V}$ contains at least one level labeled as baseline, that is as \lq\lq{}0\rq\rq{}, and the proof is by induction on the number baseline levels in $i$. Let $k$ denote the number of baseline levels in $i\in \I_{V}$; the factorization in (\ref{EQN:001}) holds for $k=0$ and we show that if it is true for $k=r-1$, with $r>0$, then it is also true for $k=r$. Assume that $k=r$ and that, without loss of generality, $Y_{v}=0_{v}$ with $v\in A$. Hence, if $A^{\prime}=A\backslash \{v\}$,
\begin{eqnarray*}
\pr(Y_{A}=i_{A}, Y_{B}=i_{B})
&=& \pr(Y_{v}=0_{v}, Y_{A^{\prime}}=i_{A^{\prime}}, Y_{B}=i_{B})\\
&=& \pr(Y_{A^{\prime}}=i_{A^{\prime}}, Y_{B}=i_{B})-\sum_{i_{v}=1_{v}}^{d_{v}}\pr(Y_{v}=i_{v}, Y_{A^{\prime}}=i_{A^{\prime}}, Y_{B}=i_{B})
\end{eqnarray*}
and, since the number of \lq\lq{}0\rq\rq{}'s in $i_{A^{\prime}\cup B}$ is equal to $r-1$, it follows from the induction assumption that
\begin{eqnarray*}
\pr(Y_{A}=i_{A}, Y_{B}=i_{B})
&=& \pr(Y_{A^{\prime}}=i_{A^{\prime}})\times \pr(Y_{B}=i_{B})-\sum_{i_{v}=1_{v}}^{d_{v}}\pr(Y_{v}=i_{v}, Y_{A^{\prime}}=i_{A^{\prime}})\times \pr(Y_{B}=i_{B})\\
&=&\left\{\pr(Y_{A^{\prime}}=i_{A^{\prime}})-\sum_{i_{v}=1_{v}}^{d_{v}}\pr(Y_{v}=i_{v}, Y_{A^{\prime}}=i_{A^{\prime}})\right\}\pr(Y_{B}=i_{B})\\
&=& \pr(Y_{A}=i_{A})\times \pr(Y_{B}=i_{B})
\end{eqnarray*}
as required.

\noindent{}(ii)$\,\Leftrightarrow\,$(iii). This  follows immediately from the equivalence, for every $j\in {\mathcal J}_{V}$, of (ii) and (iii) in Theorem~\ref{THM:gamma.eq.zero.and.fact.mu}.

\subsection*{Proof of Corollary~\ref{THM:gamma.eq.zero.and.disc.MP.varpi}}
As a consequence of Theorem~4 of \citet{ric:2003},  we have to show that $Y_{C_{1}}\ci\cdots\ci Y_{C_{r}}$ for every disconnected set $U$ of $\G$ if and only if for every disconnected set $U$ of $\G$ it holds that $\gamma^{j_{U}}=0$ for every $j_{U}\in \J_{U}$. Recall that $C_{1},\ldots,C_{r}$ are the connected components of $U$ and that $r\geq 2$.

If $Y_{C_{1}}\ci\cdots\ci Y_{C_{r}}$ then we can set $A=C_{1}$ and $B=C_{2}\dcup\cdots\dcup C_{r}$ so that $U=A\dcup B$ and $Y_{A}\ci Y_{B}$. The result follows by noticing that
every $j_{U}\in \J_{U}$ can be written as $j_{A\cup B}$ so that $\gamma^{j_{U}}=\gamma^{j_{A\cup B}}=0$ by Theorem~\ref{THM:poly.zero.gamma.and.fact.mu}.

We now show the reverse implication, that is, that if $\gamma^{j_{U}}=0$ for every $j_{U}\in \J_{U}$ such that $U$ is  disconnected in $\G$, then for every disconnected set $U$ of $\G$ it holds that $Y_{C_{1}}\ci\cdots\ci Y_{C_{r}}$. Let $A$ and $B$ be defined as above. Then, for every $A^{\prime}\subseteq A$ and $B^{\prime}\subseteq B$, with $A^{\prime},B^{\prime}\neq\emptyset$ the set $A^{\prime}\cup B^{\prime}$ is disconnected in $\G$ so that, by assumption, $\gamma^{j_{A^{\prime}\cup B^{\prime}}}=0$ for every $j_{A^{\prime}\cup B^{\prime}}\in \J_{A^{\prime}\cup B^{\prime}}$. This is equivalent to saying that
$\gamma^{j_{A^{\prime}\cup B^{\prime}}}=0$ for every $j\in \J_{V}$, $A^{\prime}\subseteq A$ and $B^{\prime}\subseteq B$ such that $A^{\prime}\neq\emptyset$ and $B^{\prime}\neq \emptyset$. The latter, by Theorem~\ref{THM:poly.zero.gamma.and.fact.mu}, implies that $Y_{A}\ci Y_{B}$ or, equivalently, that $Y_{C_{1}}\ci Y_{C_{2}\cup\cdots\cup C_{r}}$.
The same procedure can then be applied, for every $i=1,\ldots, r-1$, with $A=C_{i}$ and $B=C_{i+1}\cup\cdots\cup C_{r}$ to show that $Y_{C_{i}}\ci Y_{C_{i+1}\cup\cdots\cup C_{r}}$ for every $i=1,\ldots, r-1$ which, in turn, implies that $Y_{C_{1}}\ci\cdots\ci Y_{C_{r}}$, as required.

\subsection*{Proof of Lemma~\ref{THM:lemma.basic.indep}}
It is trivially true that $(X_{J^{\prime}_{v}}, X_{v}^{j_{v}})\ci Z$ implies both $X_{J^{\prime}_{v}}\ci Z$ and $X_{v}^{j_{v}}\ci Z$ and therefore it is sufficient to prove the inverse implication that is, that if both
\begin{eqnarray}\label{EQN:111-1}
\pr(X_{J^{\prime}_{v}}=
x_{J^{\prime}_{v}},Z=z)=\pr(X_{J^{\prime}_{v}}=x_{J^{\prime}_{v}})
\times \pr(Z=z)
\end{eqnarray}
and
\begin{eqnarray}\label{EQN:111-2}
\pr(X_{v}^{j_{v}}=x_{v}^{j_{v}}, Z=z)=\pr(X_{v}^{j_{v}}=x_{v}^{j_{v}})\times \pr(Z=z)
\end{eqnarray}
hold true for every $x_{J^{\prime}_{v}}$, $x_{J^{\prime}_{v}}$ and $z$ in the state pace of the corresponding variables, then it also holds that
\begin{eqnarray}\label{EQN:111-3}
\pr(X_{J^{\prime}_{v}}=x_{J^{\prime}_{v}}, X_{v}^{j_{v}}=x_{v}^{j_{v}}, Z=z)=\pr(X_{J^{\prime}_{v}}=x_{J^{\prime}_{v}}, X_{v}^{j_{v}}=x_{v}^{j_{v}})\times \pr(Z=z).
\end{eqnarray}

There are three possible cases: (i) the vector $(x_{J^{\prime}_{v}}, x_{v}^{j_{v}})$ contains more than one entry equal to 1; (ii) exactly one entry of $(x_{J^{\prime}_{v}}, x_{v}^{j_{v}})$ is equal to 1 and, (iii), all the entries  of $(x_{J^{\prime}_{v}}, x_{v}^{j_{v}})$ are equal to zero.

In case (i) the factorization (\ref{EQN:111-3}) trivially holds true because both $\pr(X_{J^{\prime}_{v}}=x_{J^{\prime}_{v}}, X_{v}^{j_{v}}=x_{v}^{j_{v}}, Z=z)=0$ and $\pr(X_{J^{\prime}_{v}}=x_{J^{\prime}_{v}}, X_{v}^{j_{v}}=x_{v}^{j_{v}})=0$. In the case (ii), for $x_{v}^{j_{v}}=1$ it holds that
$\pr(X_{J^{\prime}_{v}}=0_{J^{\prime}_{v}}, X_{v}^{j_{v}}=1, Z=z)=\pr(X_{v}^{j_{v}}=1, Z=z)$ and $\pr(X_{J^{\prime}_{v}}=0_{J^{\prime}_{v}}, X_{v}^{j_{v}}=1)=\pr(X_{v}^{j_{v}}=x_{v}^{j_{v}})$
because $X_{v}^{j_{v}}=1$ implies $X_{J^{\prime}_{v}}=0_{J^{\prime}_{v}}$, and therefore the factorization (\ref{EQN:111-3}) is implied by (\ref{EQN:111-2}). Similarly, if $x_{v}^{j_{v}}=0$ then  it holds that
$\pr(X_{J^{\prime}_{v}}=x_{J^{\prime}_{v}}, X_{v}^{j_{v}}=0, Z=z)=\pr(X_{v}^{j_{v}}=x_{v}^{j_{v}}, Z=z)$ and $\pr(X_{J^{\prime}_{v}}=x_{J^{\prime}_{v}}, X_{v}^{j_{v}}=0)=\pr(X_{v}^{j_{v}}=x_{v}^{j_{v}})$ and therefore the factorization (\ref{EQN:111-3}) is implied by (\ref{EQN:111-1}). Finally, in the case (iii) it holds that
\begin{eqnarray*}\label{EQN:111-4}
\pr(X_{J^{\prime}_{v}}=0_{J^{\prime}_{v}}, X_{v}^{j_{v}}=0, Z=z)
&=&\pr(X_{J^{\prime}_{v}}=0_{J^{\prime}_{v}}, Z=z)-\pr(X_{J^{\prime}_{v}}=0_{J^{\prime}_{v}}, X_{v}^{j_{v}}=1, Z=z)\\
&=&\pr(X_{J^{\prime}_{v}}=0_{J^{\prime}_{v}})\times \pr(Z=z)-\pr(X_{J^{\prime}_{v}}=0_{J^{\prime}_{v}}, X_{v}^{j_{v}}=1)\times \pr(Z=z)\\
&=&\left\{\pr(X_{J^{\prime}_{v}}=0_{J^{\prime}_{v}})-\pr(X_{J^{\prime}_{v}}=0_{J^{\prime}_{v}}, X_{v}^{j_{v}}=1)\right\}\times \pr(Z=z)\\
&=& \pr(X_{J^{\prime}_{v}}=0_{J^{\prime}_{v}}, X_{v}^{j_{v}}=0)\times \pr(Z=z)
\end{eqnarray*}
where the two factorizations on the right hand side of the second equality follow from (\ref{EQN:111-1}) and the case (ii) above, respectively.

\subsection*{Proof of Lemma~\ref{EQN:lemma.expanded.Markov}}

If the distribution of a set of variables satisfies the connected set Markov property with respect to a given bi-directed graph then the marginal distribution of any subset of the variables satisfies the same Markov property with respect to the corresponding induced subgraph. This fact follows immediately from the definition of the connected set Markov property because all the independence relationships implied by the subgraph are also encoded by the larger bi-directed graph.
It follows that if the distribution of $(Y_{P}, X_{J_{B}})$ satisfies the connected set Markov property with respect to $\G^{B}$ then, for every $j_{B}\in \J_{B}$, the random vector $(Y_{P}, X_{j_{B}})$ is a subvector of $(Y_{P}, X_{J_{B}})$ and, therefore, its distribution  satisfies the connected set Markov property with respect to the relevant induced subgraph $\G^{B}_{P\cup j_{B}}$. Hence, it is sufficient to show the inverse implication.

We denote by $L\subseteq P\cup J_{B}$ an arbitrary disconnected set in $\G^{B}$ and by $C_{1},\ldots, C_{r}$ its connected components. Furthermore, we set $Q=L\cap P$, $K=L\cap J_{B}$. Hence, we have to show that if the distribution of $(Y_{P}, X_{j_{B}})$ satisfies the connected set Markov property with respect to $\G^{B}_{P\cup j_{B}}$ for every $j_{B}\in \J_{B}$ then $C_{1}\ci\cdots\ci C_{r}$.

For $L\subseteq P\cup J_{B}$ we set $D^{*}=\{v \mid v\in B\mbox{ and } K\cap J_{v}\neq \emptyset\}$. Note that, if $L$ is primary we can write $L=Q\cup j_{D}$ by Lemma~\ref{THM:primary.sets} and, in this case, $D=D^{\ast}$. Furthermore, it is easy to check that $|K|\geq |D^{\ast}|$
with $|K|=|D^{\ast}|$ if and only if $L$ is primary. The proof is by induction on the value $(|K|-|D^{\ast}|)$.

We first assume $(|K|-|D^{\ast}|)=0$, which includes the case where $K=\emptyset$.
If $(|K|-|D^{\ast}|)=0$ then $L=Q\cup j_{D}$ is primary and there exists a $j_{B}\in \J_{B}$ such that $L=Q\cup j_{D}\subseteq P\cup j_{B}$. Hence, the independence $C_{1}\ci\cdots\ci C_{r}$ follows from the fact that, by assumption, $(Y_{P}, X_{j_{B}})$ satisfies the connected set Markov property with respect to $\G^{B}_{P\cup j_{B}}$.

We now show that if the result is true for $(|K|-|D^{\ast}|)\leq k$ with $k\geq 0$  then it is also true for $(|K|-|D^{\ast}|)=k+1$. If $(|K|-|D^{\ast}|)=k+1$ then  $(|K|-|D^{\ast}|)>0$ so that  $L$ is non-primary and, in turn, this implies that there exists a $v\in B$ such that $|J_{v}\cap K|\geq 2$. Hence, we set $J^{\prime}_{v}=J_{v}\cap K$ and remark that the subgraph induced by $J^{\prime}_{v}$ is complete because $\G^{B}$ is a $B$-expanded graph. This implies that $J^{\prime}_{v}$ is contained in exactly one connected component of $L$ and we assume, without loss of generality, that  $J^{\prime}_{v}\subseteq C_{1}$. For an arbitrary element $j^{\prime}_{v}\in J^{\prime}_{v}$ both $C_{1}\backslash \{j^{\prime}_{v}\},\ldots, C_{r}$ and $\{j^{\prime}_{v}\},\ldots, C_{r}$ are disconnected sets  in $\G^{B}$. Hence, it follows from  the induction assumption that both $C_{1}\backslash \{j^{\prime}_{v}\}\ci\cdots\ci C_{r}$ and $\{j^{\prime}_{v}\}\ci\cdots\ci C_{r}$ and by Lemma~\ref{THM:lemma.basic.indep} that  $C_{1}\ci\cdots\ci C_{r}$ as required.

\subsection*{Proof of Lemma~\ref{THM:zero.probability}}

Let $K$ be a primary subset of $J_{B}$. In this case, by Lemma~\ref{THM:primary.sets}, there exists a unique subset $D\subseteq B$ such that $K=j_{D}\in \J_{D}$ and, therefore, $X_{K}=X_{j_{D}}=(X^{j_{v}}_{v})_{j_{v}\in j_{D}}$.
In order to show that $\pr(X_{K}=1_{K}, X_{J_{B}\backslash K}=0_{J_{B}\backslash K})
=\pr(Y_{D}=j_{D}, Y_{B\backslash D}=0_{B\backslash D})$ we notice that (i) $J_{B}=J_{B\backslash D}\dcup J_{D}$ and therefore also $J_{B}\backslash j_{D}=J_{B\backslash D}\dcup (J_{D}\backslash j_{D})$ because $j_{D}\subseteq J_{D}$;
(ii) $X_{j_{D}}=1_{j_{D}}$ implies that $X_{J_{D}\backslash j_{D}}=0_{J_{D}\backslash j_{D}}$; (iii) it follows from (\ref{EQN:def.dichotomization}) that both
$X_{j_{D}}=1_{j_{D}}$ iff $Y_{D}=j_{D}$, and $X_{J_{B\backslash D}}=0_{J_{B\backslash D}}$ iff $Y_{B\backslash D}=0_{B\backslash D}$. Hence, by applying (i) to (iii) one obtains
\begin{eqnarray*}
\pr(X_{K}=1_{K}, X_{J_{B}\backslash K}=0_{J_{B}\backslash K})
&=& \pr(X_{j_{D}}=1_{j_{D}}, X_{J_{B}\backslash j_{D}}=0_{J_{B}\backslash j_{D}})\\
&=& \pr(X_{j_{D}}=1_{j_{D}}, X_{J_{B\backslash D}}=0_{J_{B\backslash D}}, X_{J_{D}\backslash j_{D}}=0_{J_{D}\backslash j_{D}})\\
&=& \pr(X_{j_{D}}=1_{j_{D}}, X_{J_{B\backslash D}}=0_{J_{B\backslash D}})\\
&=&\pr(Y_{D}=j_{D}, Y_{B\backslash D}=0_{B\backslash D}),
\end{eqnarray*}
as required.  The fact that for $j_{D}\neq \emptyset$ it holds that $\pr(X_{K}=1_{K})=\pr(Y_{D}=j_{D})$ follows from (iii) above. For every $K\subseteq J_{B}$ it holds by  (\ref{EQN:mobius.binary}) that $\tilde{\mu}_{K}=\pr(X_{K}=1_{K})$. Then it  follows from the result above and (\ref{EQN:mu.content}) that $\tilde{\mu}_{j_{D}}=\pr(X_{j_{D}}=1_{j_{D}})=\pr(Y_{D}=j_{D})=\mu^{j_{D}}$. By definition, $\tilde{\gamma}_{K}=\sum_{E \subseteq K} (-1)^{|K \backslash E|} \log\tilde{\mu}_{E}$ and if  $K=j_{D}$ is a primary subset of $J_{B}$ then
it is not difficult to check that (i) the set $\{E| E\subseteq j_{D}\}$ can be alternatively written in the form $\{j_{H}=(j_{D})_{H}| H\subseteq D\}$, (ii) if $H\subseteq D$ then $(j_{D})_{H}=j_{H}\in \J_{H}$ so that $\tilde{\mu}_{j_{H}}=\mu^{j_{H}}$, and (iii) for every $H\subseteq D$ it holds that $|j_{D}|=|D|$, $|j_{H}|=|H|$ and $j_{H}\subseteq j_{D}$ so that $|j_{D} \backslash j_{H}|=|D \backslash H|$. By using points (i) to (iii) above one obtains
\begin{eqnarray*}
\tilde{\gamma}_{j_{D}}
&=& \sum_{E \subseteq j_{D}} (-1)^{|j_{D} \backslash E|} \log\tilde{\mu}_{E}\\
&=&\sum_{H \subseteq D} (-1)^{|j_{D} \backslash j_{H}|} \log\tilde{\mu}_{j_{H}}\\
&=& \sum_{H \subseteq D} (-1)^{|D \backslash H|} \log\mu^{j_{H}}\\
&=&\gamma^{j_{D}},
\end{eqnarray*}
and this completes the proof of the first statement.

We prove the second statement by  noticing that, if $K\subseteq J_{B}$ is non-primary, then, it follows from the definition of primary subset that there exists a $v\in B$ such that $|K\cap J_{v}|>1$. Consequently, there exists a pair $j_{v},j^{\prime}_{v}\in J_{v}$ with $j_{v}\neq j^{\prime}_{v}$ such that $j_{v},j^{\prime}_{v}\in K$. In this case, $\pr(X_{K}=1_{K})\leq \pr(X^{j_{v}}_{v}=1, X^{j^{\prime}_{v}}_{v}=1)=\pr(Y_{v}=j_{v}, Y_{v}=j^{\prime}_{v})=0$. This also implies that $\pr(X_{K}=1_{K}, X_{J_{D}\backslash K}=0_{J_{B}\backslash K})=0$ because $\pr(X_{K}=1_{K}, X_{J_{D}\backslash K}=0_{J_{B}\backslash K})\leq\pr(X_{K}=1_{K})=0$. It follows also that $\tilde{\mu}_{K}=\pr(X_{K}=1_{K})=0$ and that  $\tilde{\gamma}_{K}$ is well-defined because its computation involves the logarithm of $\tilde{\mu}_{K}=0$.

\subsection*{Proof of Theorem~\ref{THM:cor.to.binary.exp.prop}}
We first consider the case where $L$ is a primary subset of $P\cup J_{B}$, so that $L=Q\cup j_{D}$ as in Lemma~\ref{THM:primary.sets}.
If we take the binary expansion of $Y_{V}$ with respect to $J_{V}=J_{P}\dcup J_{B}$, that is equal to $X_{J_{V}}=(X_{J_{P}}, X_{J_{B}})$, then it follows from Lemma~\ref{THM:zero.probability} that %
\begin{eqnarray}\label{EQN:002}
\mu^{j_{Q\cup D}}=\tilde{\mu}_{j_{Q\cup D}}\qquad\mbox{and}\qquad \gamma^{j_{Q\cup D}}=\tilde{\gamma}_{j_{Q\cup D}}
\end{eqnarray}
for every $j_{Q}\in \J_{Q}$. Recall that, since both the \Mob\ and the  \LML\ parameters satisfy the upward compatibility property then the computation of both  $\tilde{\mu}_{j_{Q\cup D}}$ and $\tilde{\gamma}_{j_{Q\cup D}}$ can be carried out with respect to the distribution of $(X_{j_{Q}}, X_{j_{D}})$ that is a subvector of $(X_{J_{Q}}, X_{j_{D}})$. It also follows from upward compatibility that, for every $j_{Q}\in \J_{Q}$, both $\tilde{\mu}_{j_{D}}^{j_{Q}}$ and $\tilde{\gamma}_{j_{D}}^{j_{Q}}$ can be computed with respect to the distribution of $(Y_{Q}, X_{j_{D}})$. One should notice that $X_{j_{D}}$ contains exactly one entry for every variable in $Y_{D}$ and, as a consequence, the probability table of $(Y_{Q}, X_{j_{D}})$ is strictly positive because it can be obtained by collapsing some levels of the probability table of $Y_{Q\cup D}$, that is assumed to be strictly positive. For this reason it makes sense to consider $(Y_{Q}, X_{j_{D}})$ in place of $Y_{V}$ in Lemma~\ref{THM:zero.probability} to show that
\begin{eqnarray}\label{EQN:011}
\tilde{\mu}_{j_{D}}^{j_{Q}}=\tilde{\mu}_{j_{Q\cup D}}
\qquad\mbox{and}\qquad
\tilde{\gamma}_{j_{D}}^{j_{Q}}=\tilde{\gamma}_{j_{Q\cup D}}
\end{eqnarray}
where the quantities $\tilde{\mu}_{j_{Q\cup D}}$ and $\tilde{\gamma}_{j_{Q\cup D}}$  in (\ref{EQN:011})  coincide with same quantities in (\ref{EQN:002}) because they are computed with respect to the binary expansion $(X_{J_{Q}}, X_{j_{D}})$ of $(Y_{Q}, X_{j_{D}})$. Hence, (\ref{EQN:002}) and (\ref{EQN:011}) lead to  $\mu^{j_{Q\cup D}}=\tilde{\mu}_{j_{D}}^{j_{Q}}$ and $\gamma^{j_{Q\cup D}}=\tilde{\gamma}_{j_{D}}^{j_{Q}}$ as required.

Consider now the case where $L$ is non-primary. If we set $K=L\cap J_{B}$ and $Q=L\cap P$, then, for every $j_{Q}\in \J_{Q}$ it follows form  definition of \Mob\ parameter and Lemma~\ref{THM:zero.probability} that $\tilde{\mu}_{j_{D}}^{j_{Q}}=\pr(Y_{Q}=j_{Q}, X_{K}=1_{K})\leq \pr(X_{K}=1_{K})=0$. This implies that $\tilde{\gamma}_{j_{D}}^{j_{Q}}$ is not well-defined because its computation involves the logarithm of $\tilde{\mu}_{j_{D}}^{j_{Q}}=0$.

\subsection*{Proof of Theorem~\ref{THM:markov-property-for-B-expanded}}
We start this proof by remarking that, by Lemma~\ref{THM:primary.sets}, a subsets $L\subset P\cup J_{B}$ is such that $L\subset P\cup j_{B}$ for some $j_{B}\in \J_{B}$ if and only if it is a primary subset of $P\cup J_{B}$. This also implies that if $L\subset P\cup j_{B}$ we can write $L=Q\cup j_{D}$, as in Lemma~\ref{THM:primary.sets}.

By Lemma~\ref{EQN:lemma.expanded.Markov}, the distribution of $Y^{B}=(Y_{P}, X_{J_{B}})$ satisfies the connected set Markov property with respect to $\G^{B}$ if and only if the distribution of $(Y_{P}, X_{j_{B}})$ satisfies the connected set Markov property with respect to $\G^{B}_{P\cup j_{B}}$ for every $j_{B}\in\J_{B}$.

For every $j_{B}\in \J_{B}$, the random vector $X_{j_{B}}$ contains exactly one entry for every variable in $Y_{B}$ and, as a consequence, the probability table of $(Y_{P}, X_{j_{B}})$ is strictly positive because it can be obtained by collapsing some levels of the probability table of $Y_{V}$, that is strictly positive by assumption. Hence, we can apply Corollary~\ref{THM:gamma.eq.zero.and.disc.MP}, and it follows that the distribution of $(Y_{P}, X_{j_{B}})$ satisfies the connected set Markov property with respect to $\G^{B}_{P\cup j_{B}}$ if and only if for every set $L\subseteq P\cup j_{B}$ that is disconnected in $\G^{B}_{P\cup j_{B}}$ it holds that $\tilde{\gamma}^{j_{Q}}_{j_{D}}=0$ for every $j_{Q}\in \J_{Q}$. Note that it is possible that either $Q=\emptyset$ or $D=\emptyset$. We can thus state that:
\begin{quote}
The distribution of $Y^{B}=(Y_{P}, X_{J_{B}})$ satisfies the connected set Markov property with respect to $\G^{B}$ if and only if for every $j_{B}\in \J_{B}$ and  every set $L\subseteq P\cup j_{B}$ that is disconnected in $\G^{B}_{P\cup j_{B}}$ it holds that $\tilde{\gamma}^{j_{Q}}_{j_{D}}=0$ for every $j_{Q}\in \J_{Q}$.
\end{quote}
Notice that: (a) a subsets $L\subseteq P\cup J_{B}$ is such that $L\subseteq P\cup j_{B}$ for some $j_{B}\in \J_{B}$ if and only if it is a primary subset of $P\cup J_{B}$, (b) $L\subset P\cup J_{B}$ is disconnected in $\G^{B}_{P\cup j_{B}}$ if and only if it is disconnected in $\G^{B}$ and (c)  $\tilde{\gamma}^{j_{Q}}_{j_{D}}=\gamma^{j_{Q\cup D}}$ by Corollary~\ref{THM:cor.to.binary.exp.prop}. Hence, by using (a), (b) and (c) the statement above can be rephrased as:
\begin{quote}
The distribution of $Y^{B}=(Y_{P}, X_{J_{B}})$ satisfies the connected set Markov property with respect to $\G^{B}$ if and only if for every set $L\subseteq P\cup J_{B}$ that is a primary subset of $P\cup J_{B}$ and disconnected in $\G^{B}$ it holds that $\gamma^{j_{Q\cup D}}=0$ for every $j_{Q}\in \J_{Q}$,
\end{quote}
and this completes the proof.
%

%
\bibliographystyle{chicago}
\bibliography{gamma-ref-imp-dic}
\appendix
\end{document}